\title{Scheduling Two Agents on a Single Machine:\\ A Parameterized Analysis of NP-hard Problems\thanks{Preliminary version of this work appeared in the proceedings of the International symposium on Parameterized and Exact Computation (IPEC) 2015.}}
\author{Danny Hermelin\inst{1} \and Judith-Madeleine Kubitza\inst{2} \and Dvir Shabtay\inst{1} \and \\ Nimrod Talmon\inst{3} \and Gerhard Woeginger\inst{4}}
\institute{
Ben Gurion University of the Negev, Israel \\
\email{hermelin@bgu.ac.il, dvirs@bgu.ac.il} \and
TU Berlin, Germany \\
\email{judith-madeleine.kubitza@campus.tu-berlin.de} \and
Weizmann Institute of Science, Israel \\
\email{nimrodtalmon77@gmail.com} \and
Eindhoven University of Technology, The Netherlands \\
\email{gwoegi@win.tue.nl}
}
\begin{document}

\maketitle
\begin{abstract}
%%%%%%%%%%%%%%%%%%%%%%%%%%%%%%%%%%%%%%%%%%%%%%%%%%%%%%%%%%%%%%%%%%%%%%%%%
%%%%%%%%%% Abstract
%%%%%%%%%%%%%%%%%%%%%%%%%%%%%%%%%%%%%%%%%%%%%%%%%%%%%%%%%%%%%%%%%%%%%%%%%

Scheduling theory is an old and well-established area in combinatorial optimization, whereas the much younger area of parameterized complexity has only recently gained the attention of the community.  Our aim is to bring these two areas closer together by studying the parameterized complexity of a class of single-machine two-agent scheduling problems. Our analysis focuses on the case where the number of jobs belonging to the second agent is considerably smaller than the number of jobs belonging to the first agent, and thus can be considered as a fixed parameter $k$.  We study a variety of combinations of scheduling criteria for the two agents, and for each such combination we pinpoint its parameterized complexity with respect to the parameter $k$. The scheduling criteria that we analyze include the total weighted completion time, the total weighted number of tardy jobs, and the total weighted number of just-in-time jobs.  Our analysis draws a borderline between tractable and intractable variants of these problems. 
\end{abstract}

\section{Introduction}
%%%%%%%%%%%%%%%%%%%%%%%%%%%%%%%%%%%%%%%%%%%%%%%%%%%%%%%%%%%%%%%%%%%%%%%%%
%%%%%%%%%% Introduction
%%%%%%%%%%%%%%%%%%%%%%%%%%%%%%%%%%%%%%%%%%%%%%%%%%%%%%%%%%%%%%%%%%%%%%%%%

Scheduling is a well-studied area in operations research that provides fertile grounds for several combinatorial problems. In a typical scheduling problem, we are given a set of jobs that are to be scheduled on a set of machines which is arranged according to a specific machine setting. The objective is to determine a schedule which minimizes a predefined scheduling criterion such as the makespan, total weighted completion time, and total weighted tardiness of the schedule. There are various machine settings including the single machine setting, parallel machines, flow-shop and job-shop, and each scheduling problem may in addition have various attributes and constraints. We refer the reader to \emph{e.g.}~\cite{Brucker2006,Leung2004,Pinedo2008} for an extensive introduction to the area of scheduling, and for a detailed survey of classical results.

Many scheduling problems are NP-hard. Typically, such hard problems include a multitude of parameters, and many NP-hardness proofs exploit the fact that these parameters can be arbitrary large in theory. However, in many practical settings, one or more of these parameters will actually be quite small. For example, the number of different items that can be processed in the shop might be limited, resulting in a scheduling instance where only a limited number of different processing times appear. A limited set of planned delivery dates resulting in a scheduling instance with a limited number of different due dates is another example. It is therefore natural to ask whether NP-hard scheduling problems become tractable when some of their parameters can be assumed to be comparatively small in practice. Luckily, a framework for answering such questions has been recently developed by the computer science community - the theory of parameterized complexity.

Parameterized complexity facilitates the analysis of computational problems in terms of various instance parameters that may be independent of the total input length. In this way, problem instances are analyzed not only according to the total input length $n$, but also according to an additional numerical parameter $k$ that may encode other aspects of the input. A problem is considered tractable if there is an algorithm that optimally solves any instance in $f(k) \cdot n^{O(1)}$ time, where $f()$ is allowed to be any arbitrary computable function which is independent of $n$, and the exponent in $n^{O(1)}$ is required to be independent of $k$. For example, a running-time of $2^{O(k)} \cdot n ^3$ is considered tractable in the parameterized setting, while $n^{O(k)}$ is not. In this way we can model scenarios where certain problem parameters are typically much smaller than the total input length, yet may not be small enough to be considered constant. 

Parameterized complexity has enjoyed tremendous success since its first developments in the early 90s, as can be exemplified by the various textbooks on the subjects~\cite{Cygan2015,FG98,DF99,N06}. However, there are currently very few papers that attack scheduling problems from the parameterized perspective~\cite{BodlaenderandFellows1995,FellowsandMcCartin2003,MnichandWiese2013,journals/scheduling/BevernMNW15,journals/corr/BevernNS15}. This is rather disappointing since scheduling problems seem to be particulary adequate for parameterized analyses. For one, scheduling problems which are NP-hard lack polynomial-time algorithms for finding optimal solutions, and in several applications, approximate solutions can result in big revenue losses. This gives strong motivation for computing exact solutions even if computing such solutions requires a lot of resources. Secondly, as argued above, scheduling problems typically have an abundance of natural problem parameters that can be comparatively small in practice. Thus, algorithms whose running times grow exponentially in such parameters alone should be quite useful for practical purposes.

Our aim in this paper is to help close the gap between research in parameterized complexity and the area of scheduling. We initiate a parameterized analysis on problems occurring in the setting of multi-agent scheduling~\cite{Agnetisbook}, a contemporary area which is nowadays at the cutting edge of scheduling research. We focus on the most basic case where there are only two agents, and all jobs are to be processed on a single machine. Furthermore, our parameterized analysis focuses on the scenario where the second agent has a significantly smaller number of jobs than the other. The number of jobs belonging to this agent is thus taken as a parameter, and is denoted by $k$ throughout the paper. We preform an extensive parameterized analysis for several two-agent single-machine scheduling problems with respect to this parameter, providing a clear picture of the applicability of parameterized algorithmics to these problems.

\subsection{Our contribution}
%%%%%%%%%%%%%%%%%%%%%%%%%%%%%%%%%%%%%%%%%%%%%%%%%%%%%%%%%%%%%%%%%%%%%%%%%
%%%%%%%%%% Our contribution
%%%%%%%%%%%%%%%%%%%%%%%%%%%%%%%%%%%%%%%%%%%%%%%%%%%%%%%%%%%%%%%%%%%%%%%%%

In this work we investigate a variety of combinations for the objective functions of each agent. For each such combination, we consider the problem where each agent has a bound on his objective function, and the goal is to determine whether there exists a single-machine schedule that meets both bounds simultaneously. The objective functions we consider are (see Section~\ref{section: pre} for formal definitions):
\begin{enumerate}
\item \emph{Total weighted completion time}, where jobs have weights, and the goal is to minimize the sum of weighted completion time over the entire job set of the agent.
\item \emph{Total weighted number of tardy jobs}, where jobs have weights and due-dates, and the objective is to minimize the total weighted number of jobs that terminate after their due-date.
\item \emph{Total weighted number of just-in-time (\emph{JIT}) jobs}, where jobs have weights and due-dates, and the goal is to maximize the total weighted number of jobs that terminate precisely on their due date.
\end{enumerate}

We consider several combinations of these scheduling criteria for which the corresponding problem is NP-complete, and for each such combination we determine whether or not the corresponding scheduling problem becomes fixed-parameter tractable with respect to $k$. There are also other subtleties that we consider, such as the unit weight case or the unit processing-time case. The paper is organized according to the first agent's scheduling criteria. Thus, Section~\ref{section: TotalWeightedCompletionTimes} deals with the case where the scheduling criterion of agent 1 is the total weighted completion time, Section~\ref{section:WeightedNumberTardyJobs} focuses on problems where the first agent criterion is the total weighted number of tardy jobs, and Section~\ref{section: WeightedNumberJustinTimJobs} is concerned with problems where the first agent criterion is the total weighted number of JIT jobs. 

\subsection{Related work}
%%%%%%%%%%%%%%%%%%%%%%%%%%%%%%%%%%%%%%%%%%%%%%%%%%%%%%%%%%%%%%%%%%%%%%%%%
%%%%%%%%%% Related work
%%%%%%%%%%%%%%%%%%%%%%%%%%%%%%%%%%%%%%%%%%%%%%%%%%%%%%%%%%%%%%%%%%%%%%%%%

The set of two-agent scheduling problems was first introduced by Baker and Smith~\cite{BakerandSmith2003} and Agnetis \emph{et al.}~\cite{Agnetisetal2004}. For different combinations of the scheduling criteria, Baker and Smith focus on analyzing the problem of finding a schedule that minimizes the weighted sum of the two criteria, while Agnetis \emph{et al.} focus on analyzing the problem of minimizing the first agent criterion while keeping the value of the second agent criterion not greater than a given bound. Following these two fundamental papers, numerous researchers have studied different combinations of multi-agent scheduling problems, see \emph{e.g.}~\cite{Kov2012,Leeetal2009,Leungetal2010,MorandMosheiov2011,Yuanetal,Yin2016}. Detailed surveys of these problems appear in Perez-Gonzalez and Framinan~\cite{PerezGonzalezandFraminan} and in a recent book by Agnetis \emph{et al.}~\cite{Agnetisbook}. We give further detail of the results that are more directly related to our work in the appropriate sections of the remainder of the paper.

\section{Preliminaries}
\label{section: pre}
%%%%%%%%%%%%%%%%%%%%%%%%%%%%%%%%%%%%%%%%%%%%%%%%%%%%%%%%%%%%%%%%%%%%%%%%%
%%%%%%%%%% Preliminaries
%%%%%%%%%%%%%%%%%%%%%%%%%%%%%%%%%%%%%%%%%%%%%%%%%%%%%%%%%%%%%%%%%%%%%%%%%

In this section we introduce the notation and terminology that will be used throughout the paper. In particular, we provide concrete definitions for the problems we study, as well as a very brief introduction to the theory of parameterized complexity.

\subsection{Scheduling notation and problem definitions}

In all problems considered in this paper, the input consists of two sets of jobs that have to be processed non-preemptively on a single machine. The first set $\mathcal{J}^{(1)}=\{J_{1}^{(1)},...,J_{n}^{(1)}\}$ belongs to agent 1, while the second set $\mathcal{J}^{(2)}=\{J_{1}^{(2)},...,J_{k}^{(2)}\}$ belongs to agent 2. We assume that $k \leq n$, and for practical purposes one should think of $k$ as much smaller than $n$. Let $p_{j}^{(i)}$ be a positive integer denoting the \emph{processing time} of job $J_{j}^{(i)}$. Moreover, when relevant, let $d_{j}^{(i)}$ and $w_{j}^{(i)}$ be two positive integers representing the \emph{due date} and the \emph{weight} of job $J_{j}^{(i)}$, respectively. A \emph{schedule} $\sigma$ of $\mathcal{J}^{(1)} \cup \mathcal{J}^{(2)}$ is a set of disjoint time intervals $I_j^{(i)}=(C_j^{(i)}-p_j^{(i)},C_j^{(i)}]$ for $j=1,...,n$ if $i=1$ and $j=1,...,k$ if $i=2$, where $I_j^{(i)}$ represents the time interval in $\sigma$ where job $J_j^{(i)}$ is processed on the single machine. Note that $C_{j}^{(i)}$ represents the completion time of $J_j^{(i)}$. In case, one or both of the agents jobs have due dates, we will use $L_j^{(i)}=C_j^{(i)}-d_j^{(i)}$ to denote the \emph{lateness} of job $J_j^{(i)}$, and we set $L_{\max}^{(i)} = \max_j L_j^{(i)}$. If $L_j^{(i)}\leq 0$, then job $J_j^{(i)}$ is an \emph{early} job in $\sigma$, and otherwise it is \emph{tardy}. Accordingly, the set $\mathcal{E}^{(i)}=\{J_j^{(i)}\in \mathcal{J}^{(i)}|L_j^{(i)}\leq 0\}$ is the set of early jobs in $\sigma$ that belongs to agent $i$, and the set $\mathcal{T}^{(i)}=\{J_j^{(i)}\in \mathcal{J}^{(i)}|L_j^{(i)} > 0\}$ is the set of tardy jobs that belong to agent $i$. We also use $\widehat{\mathcal{E}}^{(i)}$ to denote the set $\widehat{\mathcal{E}}^{(i)}=\{J_j^{(i)}\in \mathcal{J}^{(i)}|L_j^{(i)}=0\}$.

The quality of a schedule is measured by two different criteria, one per each agent. We focus on problems for which either one of the two agents criteria may be either one of the following three possibilities:
\begin{enumerate}
\item The \emph{weighted sum of completion times}, denoted by $\sum w_{j}^{(i)}C_{j}^{(i)}$.
\item The \emph{weighted number of tardy jobs}, where job $J_{j}^{(i)}$ is said to be \emph{tardy} if $C_{j}^{(i)}>d_{j}^{(i)}$. We use a binary indicator variable $U_{j}^{(i)}$ which indicates whether or not $J_{j}^{(i)}$ is tardy, and $\sum w_{j}^{(i)}U_{j}^{(i)}$ denotes the weighted number of tardy jobs of agent $i$.
\item The \emph{weighted number of just-in-time (JIT) jobs}, where job $J_{j}^{(i)}$ is said to be \emph{just-in-time} if $C_{j}^{(i)}=d_{j}^{(i)}$. We use a binary indicator variable $E_{j}^{(i)}$ which indicates whether or not $J_{j}^{(i)}$ is just-in-time, and $\sum w_{j}^{(i)}E_{j}^{(i)}$ denotes the weighted number of just-in-time jobs of agent $i$.
\end{enumerate}

Note that while first two criteria are minimization criteria, the latter is a maximization criterion. For each possible combination of the criteria above, we consider the decision problem where we are given two positive integer bounds $A_1$ and $A_2$, one for each agent, and we need to find if there exists a job schedule in which both bounds are met. In case the scheduling criterion is the sum of weighted completion times or weighted number of tardy jobs, the bound $A_i$ is regarded as an upper-bound, while for weighted number of just-in-time jobs it is a lower-bound. We refer to such a job schedule, if it exists, as a \emph{feasible} job schedule. Using the standard three field notation in scheduling, we denote this set of problems by~$1 \,\big\vert\, \mathbb{C}^{(1)}, \mathbb{C}^{(2)} \,\big\vert\,$, where
$$
\mathbb{C}^{(i)}\in \left\{\,\sum w_{j}^{(i)}C_{j}^{(i)} \leq A_i, \,\sum w_{j}^{(i)}U_{j}^{(i)} \leq A_i, \,\sum w_{j}^{(i)}E_{j}^{(i)} \geq A_i\right\},
$$
for $i=1,2$. We will sometimes consider special cases of these problems, and when doing so we use the middle field to denote restrictions on our input. For example, the $1 \,\big\vert\, p^{(1)}_j =1, \sum w_{j}^{(1)}C_{j}^{(1)} \leq A_1, \sum C_{j}^{(2)} \leq A_2 \,\big\vert\,$ problem is the problem where the scheduling criterion for the first agent is the weighted sum of completion times, for the second agent is the (unweighed) sum of completion time and agent 1 has jobs with unit processing times.

\subsection{Basic concepts in parameterized complexity theory}

The main objective in parameterized complexity theory is to analyze the tractability of NP-hard problems with respect to input parameters that are not necessarily related to the total input size. Thus, problem instances are not only measured in terms of their input size $n$, but also in terms of an additional parameter~$k$. In this context, a problem is said to be tractable, or \emph{fixed-parameter tractable (FPT)}, if there is an algorithm that solves each instance of size $n$ and parameter $k$ in $f(k) \cdot n^{O(1)}$ time. Here, the function $f()$ can be any arbitrary computable (\emph{e.g.} exponential) function so long as it depends only on $k$, and the exponent in $n^{O(1)}$ is independent of $k$. The reader is referred to the excellent texts on the subject for more information~\cite{Cygan2015,FG98,DF99,N06}.

In parameterized complexity, a running-time of $2^{O(k)} \cdot n^3$ is considered tractable, and even $2^{2^{2^{O(k)}}} \cdot n^{100}$ is considered tractable. Note that while the above definition might allow some quite large running-times, when $k$ is sufficiently smaller than $n$, any such run-time drastically outperforms more common algorithms with running-times of $n^{O(k)}$ or $n^{O(k^{2})}$, for example. Moreover, a running time of, say, $2^{O(k)} \cdot n^3$, with moderate constants in the exponent can be quite fast in practice. In any case, parameterized complexity provides the most convenient form of analyzing the complexity an NP-hard problem with respect to the size of a given parameter. In our context, the parameter of each instance will always be the number of jobs of agent~2. Thus, we consider the setting where agent~1 has significantly more jobs to schedule, but nevertheless we still wish to meet both agents criteria.

Note that if a problem is NP-hard already for constant values of its parameter, then a fixed-parameter tractable algorithm for the problem will imply that P=NP. Thus, in our context, if we show that one of the problems we consider is already NP-hard when agent 2 has a constant number of jobs, this excludes the possibility that the problem has a fixed-parameter tractable algorithm under the assumption of P$\neq$NP. For showing such hardness results, we will use the classical NP-complete \textsc{Partition} problem~\cite{GJ79}, often used in the context of scheduling problems:
\begin{definition}[The \textsc{Partition} problem]
\label{Definition: Partition}
Given a set $X = \{x_1,\ldots,x_m\}$ of positive integers (encoded in binary) with $\sum^m_{j=1} x_j = 2z$, determine whether $X$ can be partitioned into two sets $S_1$ and $S_2$ such that $\sum_{x_j \in S_1} x_j = \sum_{x_j \in S_2} x_j = z$.
\end{definition} 

\section{Weighted Sum of Completion Times}
\label{section: TotalWeightedCompletionTimes}
%%%%%%%%%%%%%%%%%%%%%%%%%%%%%%%%%%%%%%%%%%%%%%%%%%%%%%%%%%%%%%%%%%%%%%%%%
%%%%%%%%%% Total Weighted Completion Times
%%%%%%%%%%%%%%%%%%%%%%%%%%%%%%%%%%%%%%%%%%%%%%%%%%%%%%%%%%%%%%%%%%%%%%%%%

In this section we study the $1 \,\big\vert\, \sum w_{j}^{(1)}C_{j}^{(1)} \leq A_1, \, \mathbb{C}^{(2)}  \,\big\vert\,$ problem where $\mathbb{C}^{(2)}$ can be any of the three scheduling criteria discussed in Section~\ref{section: pre}. We first show that all the three corresponding problems are unlikely to admit a fixed-parameter algorithm, since they are all NP-complete even for $k=1$ (\emph{i.e}, agent 2 has a single job) as we show in Theorem \ref{theorem: single job bob}. This motivates us to study four special cases: We show that in case the jobs of agent 1 all have unit weight, the problem becomes FPT when the criteria for agent 2 is either weighted sum of completion times or weighted number of tardy jobs. However, when the criteria of agent 2 is the number of just-in-time jobs, the problem remains intractable in this case as well. We also provide an FPT algorithm for the case where both criteria are the weighted sum of completion times, and agent 1 has jobs with unit processing times.

\subsection{Intractability of the general problem with respect to $k$}
\label{sec: intractability C:C}%
%%%%%%%%%%%%%%%%%%%%%%%%%%%%%%%%%%%%%%%%%%%%%%%%%%%%%%%%%%%%%%%%%%%%%%%%%
%%%%%%%%%% The intractability of the general problem with respect to k
%%%%%%%%%%%%%%%%%%%%%%%%%%%%%%%%%%%%%%%%%%%%%%%%%%%%%%%%%%%%%%%%%%%%%%%%%

The fact that the single agent $1\,\big\vert\, \,\big\vert\, \sum w_{j}C_{j}$ problem is solvable in $O(n\log n)$ time (see Smith~\cite{Smith}) gives us some hope that at least one of the $1 \,\big\vert\, \sum w_{j}^{(1)}C_{j}^{(1)} \leq A_1, \, \mathbb{C}^{(2)} \,\big\vert\,$ problems is tractable when $k$ is small. Unfortunately, in the following theorem we show that this is not the case for all three criteria, even if the second agent has a single job of a unit weight. We will show this via a reduction from the NP-complete \textsc{Partition} problem (see Definition~\ref{Definition: Partition}).
\begin{theorem}
\label{theorem: single job bob}
The $1 \,\big\vert\, \sum w_{j}^{(1)}C_{j}^{(1)} \leq A_1, \, \mathbb{C}^{(2)} \,\big\vert\,$ problem is NP-complete for $k = 1$, when $\mathbb{C}^{(2)}$ is either $\sum C_{j}^{(2)} \leq A_2$, $\sum U_{j}^{(2)} \leq A_2$, or $\sum E_{j}^{(2)} \geq A_2$.
\end{theorem}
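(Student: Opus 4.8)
The plan is to reduce from \textsc{Partition}, establishing hardness simultaneously for all three choices of $\mathbb{C}^{(2)}$ from a single construction. Membership in NP is immediate: a schedule is certified by the processing order of the jobs together with the position of the single agent-2 job, from which all completion times and both objective values are computable in polynomial time and checked against $A_1,A_2$. For the hardness direction, given a \textsc{Partition} instance $X=\{x_1,\dots,x_m\}$ with $\sum_j x_j = 2z$, I would create $n=m$ jobs for agent~1 with $p^{(1)}_j = w^{(1)}_j = x_j$, and a single job for agent~2 (so $k=1$) of unit weight and processing time $p^{(2)}=1$. The guiding idea is that the lone agent-2 job acts as a \emph{separator}: in an idle-free schedule it splits the agent-1 jobs into the set $S_1$ scheduled before it and the set $S_2$ scheduled after it, and I want feasibility to correspond exactly to $\sum_{x_j\in S_1}x_j=\sum_{x_j\in S_2}x_j=z$.

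The key computation is to show that agent~1's objective depends only on $P_2:=\sum_{x_j\in S_2}x_j$, the total processing time placed after the agent-2 job, independently of the internal orderings of $S_1$ and $S_2$. Because every agent-1 job satisfies $p^{(1)}_j=w^{(1)}_j$, a block of such jobs scheduled consecutively contributes a fixed amount regardless of order, and a short manipulation using $P_1+P_2=2z$ (where $P_1=\sum_{x_j\in S_1}x_j$) collapses the total to $\sum w^{(1)}_j C^{(1)}_j = 2z^2+\tfrac12\sum_j x_j^2+P_2$. Since $x_j^2\equiv x_j\pmod 2$ gives $\sum_j x_j^2\equiv 2z\equiv 0\pmod 2$, the quantity $A_1:=2z^2+\tfrac12\sum_j x_j^2+z$ is a positive integer, and agent~1's bound $\sum w^{(1)}_j C^{(1)}_j\le A_1$ becomes equivalent to $P_2\le z$.

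I would then encode the reverse inequality $P_1\le z$, i.e.\ $P_2\ge z$, through agent~2's criterion, writing $C^{(2)}$ for the completion time of its single job. In an idle-free schedule $C^{(2)}=P_1+1$, so: for $\sum C^{(2)}_j\le A_2$ set $A_2:=z+1$; for $\sum U^{(2)}_j\le A_2$ set $d^{(2)}:=z+1$ and bound $A_2:=0$; for $\sum E^{(2)}_j\ge A_2$ set $d^{(2)}:=z+1$ and bound $A_2:=1$. In each case the constraint forces $P_1\le z$, so that together with $P_2\le z$ the two bounds sandwich $P_2$ at exactly $z$, which is precisely a balanced partition. The reverse direction is trivial: from a partition $(S_1,S_2)$ one schedules $S_1$, then the agent-2 job, then $S_2$, obtaining $C^{(2)}=z+1$ and agent-1 cost exactly $A_1$.

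The main obstacle is justifying that idle time may be ignored. For the two ``$\le$'' criteria this is easy: left-shifting a schedule to remove all idle time only decreases completion times, hence preserves both bounds. The just-in-time criterion is the delicate case, since the agent-2 job must complete \emph{exactly} at $d^{(2)}=z+1$, so idle time before it cannot simply be deleted without destroying just-in-timeness. I would treat this directly, minimizing agent~1's cost for a fixed split $(S_1,S_2)$ under $C^{(2)}=z+1$ (packing $S_1$ at the front, with forced idle in $[P_1,z]$ when $P_1<z$, and $S_2$ immediately after the agent-2 job); the algebra yields cost $2z^2-zP_2+P_2^2+\tfrac12\sum_j x_j^2+P_2$, and $\le A_1$ simplifies to $(P_2-z)(P_2+1)\le 0$, i.e.\ $P_2\le z$. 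Since just-in-timeness already forces $P_1\le z$, i.e.\ $P_2\ge z$, we again obtain $P_2=z$, and as a by-product the minimizing schedule has no idle time, closing the argument for this criterion as well.
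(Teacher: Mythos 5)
Your proposal is correct and follows essentially the same route as the paper: the identical \textsc{Partition} reduction with $p^{(1)}_j=w^{(1)}_j=x_j$, a unit separator job for agent~2, the same bounds ($A_1=z+\sum_{i=1}^n\sum_{j=1}^i x_ix_j=2z^2+\tfrac12\sum_j x_j^2+z$, and $A_2=z+1$, $0$, or $1$ respectively), and the same accounting showing that the two bounds sandwich the post-separator load at exactly $z$. The only difference is that you explicitly verify that idle time cannot help (in particular the forced-idle computation for the just-in-time criterion), a point the paper's argument leaves implicit.
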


\begin{proof}
We provide a reduction from the NP-complete \textsc{Partition} problem defined above. Given an instance $(X,z)$ to the \textsc{Partition} problem, with $X=\{x_1,\ldots,x_m\}$, we construct the following two-agent scheduling instance: Agent 1 will have $n=m$ jobs and agent 2 will have a single job (\emph{i.e.}, $k=1$). For $j=1,...,n$, we set $p^{(1)}_j = w^{(1)}_j = x_j$. Moreover, we set $p^{(2)}_1 = 1$, and in case $\mathbb{C}^{(2)}\in\{\sum U_{j}^{(2)} \leq A_2, \sum E_{j}^{(2)} \geq A_2\}$, we also set $d^{(2)}_1 = z + 1$. The bound on the total weighted completion time of agent 1 is set to $A_1=z+ \sum^n_{i=1} \sum_{j=1}^i x_i x_j$. The bound of agent 2 depends on his scheduling criterion: If it is the sum of completion times, we set $A_2=z + 1$, if it is the weighted number of tardy jobs we set $A_2=0$, and if it is the weighted number of JIT jobs we set $A_2=1$.

Now suppose that $X$ can be partitioned into two sets $S_1$ and $S_2$ with $\sum_{x_i \in S_1} x_i = \sum_{x_i \in S_2} x_i = z$. We construct a schedule $\sigma$ where we first schedule all agent's 1 jobs corresponding to elements of $S_1$ in an arbitrary order, followed by job $J_{1}^{(2)}$, followed by all of the jobs of agents 1 corresponding to the elements of $S_2$ in an arbitrary order. Observe that job $J_{1}^{(2)}$ completes in $\sigma$ after $\sum_{x_i \in S_1} x_i + 1 = z + 1$ time units, and so the bound of agent 2 is met in all three criteria $\mathbb{C}^{(2)}$. To see that the first agent bound is met as well, observe that if we exclude $J_{1}^{(2)}$ from $\sigma$ then the total weighted completion time of agent 1 jobs is precisely $\sum^n_{i=1} \sum_{j=1}^i x_i x_j$. Adding job $J_{1}^{(2)}$ increases the completion time of each of the first agent jobs that correspond to elements of $S_2$ by a unit. Thus, $J_{1}^{(2)}$ contributes precisely $\sum_{x_i \in S_2} x_i = z$ to the total weighted completion time of agent 1 jobs, and so the first agent bound on the total weighted completion time is met as well.

For the other direction, suppose there is a feasible schedule $\sigma$ for any possible option of $\mathbb{C}^{(2)}$. Let $\mathcal{J}^{(1)}_1$ denote the set of agent 1 jobs that are scheduled before job $J_{1}^{(2)}$ in $\sigma$, and let $\mathcal{J}^{(1)}_2$ denote agent 1 remaining jobs. Since agent 2 bound is satisfied in $\sigma$, in each of the three possible criteria it must be that $\sum_{J^{(1)}_j \in \mathcal{J}^{(1)}_1} p^{(1)}_j \leq z$. Moreover, note that the total weighted completion time of agent 1 jobs is $\sum^n_{i=1} \sum_{j=1}^i x_i x_j \,+\, \sum_{J^{(1)}_j \in \mathcal{J}^{(1)}_2} p^{(1)}_j$. Since agent 1 bound is also met by $\sigma$, it must be that $\sum_{J^{(1)}_j \in \mathcal{J}^{(1)}_2} p^{(1)}_j \leq z$. Since the sum of all processing times of agent 1 jobs is $2z$, we get that $\sum_{J^{(1)}_j \in \mathcal{J}^{(1)}_1} p^{(1)}_j=\sum_{J^{(1)}_j \in \mathcal{J}^{(1)}_2} p^{(1)}_j= z$. Thus, setting $S_1 = \{x_j : J^{(1)}_j \in \mathcal{J}^{(1)}_1\}$ and $S_2 = \{x_j : J^{(1)}_j \in \mathcal{J}^{(1)}_2\}$ yields a solution to our \textsc{Partition} instance. \qed
\end{proof}

%Given the intractability result in Theorem \ref{theorem: single job bob}, in the following three subsections we focus on hard cases where the first agent jobs are unweighed. %

\subsection{An FPT algorithm for the $1 \,\big\vert\, \sum C_{j}^{(1)} \leq A_1, \, \sum w_{j}^{(2)} C_{j}^{(2)} \leq A_2 \,\big\vert\,$ problem}
\label{section: CC2}%
%%%%%%%%%%%%%%%%%%%%%%%%%%%%%%%%%%%%%%%%%%%%%%%%%%%%%%%%%%%%%%%%%%%%%%%%%%%%%%%%%%%%%%%%%%%%%%%%%%%%%%%%%%%%%%%%%%%%%%
%%%%%%%%%% An FPT algorithm for the 1 | \sum C_{j}^{(1)} \leq A_1,  \sum w_{j}^{(2)} C_{j}^{(2)} \leq A_2 | problem
%%%%%%%%%%%%%%%%%%%%%%%%%%%%%%%%%%%%%%%%%%%%%%%%%%%%%%%%%%%%%%%%%%%%%%%%%%%%%%%%%%%%%%%%%%%%%%%%%%%%%%%%%%%%%%%%%%%%%%

%The proof of Theorem \ref{theorem: single job bob2} above uses the fact that the jobs of the first agent have various weights and processing times.
In stark contrast to the result in Theorem \ref{theorem: single job bob}, we next show that, although being NP-complete (see Agnetis \emph{et al.}~\cite{Agnetisetal2004}), the $1 \,\big\vert\, \sum C_{j}^{(1)} \leq A_1, \, \sum w_{j}^{(2)} C_{j}^{(2)} \leq A_2 \,\big\vert\,$ problem is much easier to handle. We present an FPT algorithm for this problem for parameter $k$, using the powerful result of Lenstra concerning mixed integer linear programs~\cite{Len83}. To begin with, we will need the following lemma which can be easily derived by using a simple pair-wise interchange argument (see also Agnetis \emph{et al.}~\cite{Agnetisetal2004} that prove the same argument for the less general case where both agents jobs have unit weights):
\begin{lemma}
\label{lemma: SPT}
If there is a feasible solution for the $1 \,\big\vert\, \sum C_{j}^{(1)} \leq A_1, \, \sum w_{j}^{(2)} C_{j}^{(2)} \leq A_2 \,\big\vert\,$ problem, then there exists a feasible solution where the jobs of agent 1 are scheduled in a non-decreasing order of $p_{j}^{(1)}$, i.e., according to the shortest processing time (SPT) rule.
\end{lemma}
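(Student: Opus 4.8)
The plan is to prove the lemma by a pairwise interchange argument, essentially the sum-of-completion-times special case of Smith's rule, but carried out carefully so that each interchange never hurts agent~2. First I would observe that for both objectives at hand --- agent~1's $\sum C_{j}^{(1)}$ and agent~2's $\sum w_{j}^{(2)} C_{j}^{(2)}$ --- shifting any job to start earlier can only decrease completion times, hence only decrease both objective values. So, given any feasible schedule, I may first left-shift all jobs to remove idle time; the resulting no-idle schedule is still feasible and is completely described by a permutation of the $n+k$ jobs, in which each completion time equals the cumulative processing time up to and including that job.

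Next I would focus on the subsequence of agent~1 jobs induced by this permutation. If it is already non-decreasing in $p_{j}^{(1)}$ we are done; otherwise there are two agent~1 jobs that are consecutive \emph{within} the agent~1 subsequence, say $J_{a}^{(1)}$ processed earlier and $J_{b}^{(1)}$ processed later with $p_{a}^{(1)} > p_{b}^{(1)}$, separated in the full schedule only by some (possibly empty) block of agent~2 jobs $B_1,\ldots,B_r$. The interchange I would apply replaces the local order $J_{a}^{(1)}, B_1,\ldots,B_r, J_{b}^{(1)}$ by $J_{b}^{(1)}, B_1,\ldots,B_r, J_{a}^{(1)}$, leaving every other job in place.

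The crux is to check that this single interchange keeps both bounds satisfied. Writing $S$ for the start time of this window (the completion time of the job immediately preceding $J_{a}^{(1)}$, or $0$), the window has total length $p_{a}^{(1)} + \sum_i p_{B_i} + p_{b}^{(1)}$, which the swap leaves unchanged; hence every job outside the window keeps its completion time, and in particular all agent~2 jobs outside the window are unaffected. Inside the window, each agent~2 job $B_i$ now starts $p_{a}^{(1)} - p_{b}^{(1)} > 0$ time units earlier, so its completion time strictly decreases, and since the weights $w_{j}^{(2)}$ are positive, agent~2's objective does not increase and the bound $A_2$ is still met. For agent~1, only $C_{a}^{(1)}$ and $C_{b}^{(1)}$ change, and their sum drops by exactly $p_{a}^{(1)} - p_{b}^{(1)}$, so $\sum C_{j}^{(1)}$ strictly decreases and the bound $A_1$ remains satisfied.

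Finally, I would note that each such interchange is a bubble-sort transposition of the agent~1 subsequence, so it strictly decreases the number of inversions (pairs of agent~1 jobs appearing in strictly decreasing processing-time order), a nonnegative integer. Thus after finitely many interchanges the process halts at a feasible schedule whose agent~1 subsequence is non-decreasing in $p_{j}^{(1)}$, i.e.\ in SPT order, as claimed. The one point needing care --- and the reason the single-agent argument does not transfer verbatim --- is the intervening agent~2 block: the key observation that makes everything go through is that pulling the shorter agent~1 job to the front drags those agent~2 jobs earlier, which can only help agent~2's minimization objective.
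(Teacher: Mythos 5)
Your proof is correct and is exactly the ``simple pair-wise interchange argument'' the paper invokes without writing out: the paper only cites the exchange idea (and Agnetis et al.\ for the unit-weight case), while you supply the details, including the key point that swapping the two agent-1 jobs around the intervening agent-2 block only shifts those agent-2 jobs earlier. The completion-time bookkeeping (sum drops by $p_a^{(1)}-p_b^{(1)}$ for agent 1, all agent-2 completion times non-increasing) and the inversion-count termination argument are all sound.
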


Consider now the $1 \,\big\vert\, \sum C_{j}^{(1)} \leq A_1, \, \sum w_{j}^{(2)}C_{j}^{(2)} \leq A_2 \,\big\vert\,$ problem. Due to Lemma~\ref{lemma: SPT}, we assume, without loss of generality that the jobs of agent 1 are numbered according to the SPT rule such that $p_{1}^{(1)}\leq p_{2}^{(1)}\leq...\leq p_{n}^{(1)}$. By allowing an additional multiplicative factor of $k!$ to the running time of our algorithm, we can focus on a reduced subproblem where the ordering of the second agent jobs is predefined. Given a subproblem, we renumber the second agent jobs according to this ordering. Then, to determine whether a feasible schedule is actually possible for the given subproblem, we only need to figure out if it is possible to interleave the two ordered sets of jobs together in a way that satisfies both agents bounds. Towards this aim, we formalize any given subproblem as a mixed integer linear program (MILP) where the number of integer variables is $k$. We then complete the proof by using the celebrated result of Lenstra~\cite{Len83} which states that determining whether a given MILP has a feasible solution is fixed-parameter tractable with respect to the number of integer variables.

To formulate a given subproblem as an MILP, we define an integer variable $x_j$ for each job $J^{(2)}_j$ representing the number of jobs belonging to agent 1 that are scheduled before $J^{(2)}_j$. Therefore, for each $1 \leq j \leq k$, we add the constraint that
\begin{equation}
\label{eqn: c1}
0\leq x_j\leq n.
\end{equation}
Moreover, for $1 \leq j \leq k-1$, we add the constraint that
\begin{equation}
\label{eqn: c2}
x_j \leq x_{j+1}.
\end{equation}

\begin{lemma}
\label{lemma: agent 1 bound}
The bound on the total completion time of the first agent jobs can be formulated by the following constraint:
\begin{equation}
\label{eqn: alice bound}
\left( \sum^n_{j=1} (n- j + 1) \cdot p^{(1)}_j \right) + \left( \sum^k_{j=1} (n - x_j) \cdot p_j^{(2)} \right) \leq A_1.
\end{equation}
\end{lemma}

\begin{proof}
Observe that the first term in the left-hand side of constraint~(\ref{eqn: alice bound}) is precisely the total completion time of the first agent jobs when no job of the second agent is scheduled at all. We now add the second agent jobs according to the intended meaning of the variables~$x_1,\ldots,x_k$. If there are $x_j$ jobs of agent 1 scheduled prior to $J^{(2)}_j$ in the presumed schedule, then $J^{(2)}_j$ causes an increase of $p^{(2)}_j$ to the completion time of $n-x_j$ jobs belonging to the first agent. Thus, adding all of agent 2 jobs causes an additional increase of $\sum^k_{j=1} (n - x_j) \cdot p_j^{(2)}$ to the total completion time of the first agent jobs. \qed
\end{proof}

The encoding of the second agent bound is a bit more involved. Specifically, for each $J_j^{(2)}$, we introduce a real-valued variable $y_j$ which we would like to be equal to the contribution of the first agent jobs to the completion time of $J_j^{(2)}$. Note that by our intended meaning for variable $x_j$, this is precisely $\sum_{i=1}^{x_j}p_i^{(1)}$. However, we cannot encode this directly as a linear constraint. We therefore introduce $n$ additional real-valued variables corresponding to $J^{(2)}_j$, denoted as $y_{ij}$ for $i \in \{1,\ldots,n\}$, which are ensured to be non-negative by adding the constraint that

\begin{equation}
\label{eqn: c4}
y_{ij} \geq 0
\end{equation}
for $i \in \{1,\ldots,n\}$ and $j \in \{1,\ldots,k\}$. The $y_{ij}$ variables are used to provide upper-bounds to the ``steps'' in the contribution of the first agent jobs as depicted in Fig.~\ref{pic:unitWeights}. Accordingly, we add the constraints

\begin{equation}
y_{ij} \geq (x_j - i + 1)\cdot(p_i^{(1)} - p_{i - 1}^{(1)})
\end{equation}
for each $i \in \{1,\ldots,n\}$ and $j \in \{1,\ldots,k\}$ (naturally, we set here $p^{(1)}_0 = 0$). Furthermore, we add the constraint that
\begin{equation}
\label{eqn: c6}
y_j \geq \sum_{i = 1}^{n} y_{ij}
\end{equation}
for any $j \in \{1,\ldots,k\}$ so that $y_j$ will equal its intended meaning.

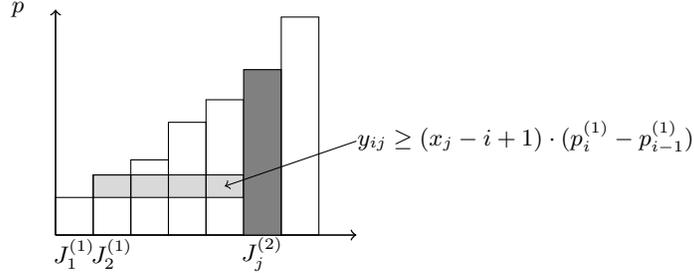
\begin{figure}
\centering
% \scalebox{0.8}{
\begin{tikzpicture}
\draw[->,semithick](0,0)--(4,0);
\draw[->, semithick](0,0)--(0,3);
\draw(-.5,3)node{$p$};
\draw[fill=gray!30] (0.5,.5)rectangle(2.5,0.8);
\draw (0,0)rectangle (0.5,0.5);
\draw(.5,0)rectangle(1,0.8);
\draw (1,0)rectangle (1.5,1);
\draw (1.5,0)rectangle (2,1.5);
\draw (2,0)rectangle (2.5,1.8);
\draw[fill=gray] (2.5,0)rectangle (3,2.2);
\draw (3,0)rectangle (3.5,2.9);
\draw (0.75,-.25)node{$J_2^{(1)}$};
\draw (2.75,-.25)node{$J_{j}^{(2)}$};
\draw(0.25,-.25) node {$J_1^{(1)}$};
\draw[->](4,1.25)--(2.25,.65);
\draw (6.25,1.3) node{$y_{ij} \geq (x_j - i + 1)\cdot (p_i^{(1)} - p_{i - 1}^{(1)})  $};
\end{tikzpicture}
% }
\caption{The contribution of $x_j$ jobs that of agent 1 which are scheduled prior to $J_j^{(2)}$ increases the completion time of this job by $\sum_{i=1}^{x_j}p_i^{(1)}$. The variable $y_{ij}$ is intended to capture the $i$'th ``step" of this contribution.}
\label{pic:unitWeights}
\end{figure}

\begin{lemma}
\label{lemma: bob bound}%
The bound on the total weighted completion time of the second agent jobs can be formulated by the following constraint:
\begin{equation}
\label{eqn: bob bound}
\left(\sum^k_{i=1}w_i^{(2)} \cdot \sum^i_{j=1} p_j^{(2)}\right)+\left(\sum^k_{j=1}w^{(2)}_j\cdot y_j\right) \,\,\leq\,\, A_2.
\end{equation}
\end{lemma}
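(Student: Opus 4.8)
The plan is to express the actual total weighted completion time of agent~2 under the intended interpretation of the variables, and then verify that the left-hand side of~(\ref{eqn: bob bound}) equals this quantity whenever the auxiliary variables take their smallest admissible values. Since the jobs of agent~1 are indexed in SPT order (Lemma~\ref{lemma: SPT}) and the variables satisfy the monotonicity constraint~(\ref{eqn: c2}), the $x_j$ jobs of agent~1 scheduled before $J_j^{(2)}$ are precisely $J_1^{(1)},\ldots,J_{x_j}^{(1)}$, while the agent-2 jobs preceding and including $J_j^{(2)}$ are $J_1^{(2)},\ldots,J_j^{(2)}$ by the fixed ordering. Hence $C_j^{(2)} = \sum_{i=1}^{j} p_i^{(2)} + \sum_{i=1}^{x_j} p_i^{(1)}$. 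Multiplying by $w_j^{(2)}$, summing over $j$, and reindexing the first double sum gives exactly the first term $\sum_{i=1}^{k} w_i^{(2)} \sum_{j=1}^{i} p_j^{(2)}$ of~(\ref{eqn: bob bound}), while the remaining part equals $\sum_{j=1}^{k} w_j^{(2)} \bigl(\sum_{i=1}^{x_j} p_i^{(1)}\bigr)$. So it suffices to argue that the constraints force $y_j$ to represent $\sum_{i=1}^{x_j} p_i^{(1)}$.

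The main step, and the one requiring the most care, is showing that the step constraints $y_{ij} \geq (x_j - i + 1)(p_i^{(1)} - p_{i-1}^{(1)})$, together with~(\ref{eqn: c4}) and~(\ref{eqn: c6}), force $y_j \geq \sum_{i=1}^{x_j} p_i^{(1)}$ with equality attainable. First I would observe that the SPT order makes every difference $p_i^{(1)} - p_{i-1}^{(1)}$ nonnegative, so the right-hand side of the step constraint is nonnegative exactly when $i \leq x_j$ and nonpositive when $i > x_j$. Consequently the smallest feasible value of $y_{ij}$ is $(x_j - i + 1)(p_i^{(1)} - p_{i-1}^{(1)})$ for $i \leq x_j$ and $0$ for $i > x_j$, so the smallest feasible value of $\sum_{i=1}^{n} y_{ij}$ is $\sum_{i=1}^{x_j} (x_j - i + 1)(p_i^{(1)} - p_{i-1}^{(1)})$. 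A summation-by-parts (telescoping) computation, using $(x_j - i + 1) - (x_j - i) = 1$ and the convention $p_0^{(1)} = 0$, collapses this to $\sum_{i=1}^{x_j} p_i^{(1)}$, which is precisely the intended value of $y_j$.

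It remains to close the logical loop. Because every weight $w_j^{(2)}$ is a positive integer and~(\ref{eqn: bob bound}) is an upper bound, in any feasible assignment we may freely decrease each $y_j$ to its minimum $\sum_{i=1}^{x_j} p_i^{(1)}$ without violating any other constraint; hence~(\ref{eqn: bob bound}) is satisfiable for given integer values $x_1,\ldots,x_k$ if and only if the true total weighted completion time of agent~2 in the corresponding schedule is at most $A_2$. This shows that~(\ref{eqn: bob bound}), together with~(\ref{eqn: c4})--(\ref{eqn: c6}) and the step constraints, faithfully encodes the second agent's bound, which completes the argument.
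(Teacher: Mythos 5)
Your proof is correct and follows essentially the same route as the paper's: both hinge on the observation that SPT order makes the step terms $(x_j-i+1)(p_i^{(1)}-p_{i-1}^{(1)})$ nonnegative precisely for $i\le x_j$, so that $y_j \ge \sum_{i=1}^n \max\{0,(x_j-i+1)(p_i^{(1)}-p_{i-1}^{(1)})\}$ telescopes to $\sum_{i=1}^{x_j} p_i^{(1)}$. Your additional remark that the minima are attainable (so the constraint is satisfiable iff the true objective meets the bound) is a welcome explicit statement of a direction the paper leaves implicit, but it is not a different argument.
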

\begin{proof}
Observe that the first term in the left-hand side of constraint~(\ref{eqn: bob bound}) is the total weighted completion time of the second agent ordered jobs $J^{(2)}_1,\ldots,J^{(2)}_k$, assuming no jobs of agent 1 are scheduled. We argue that the second term upper-bounds the contribution of agent 1 jobs. For this, it suffices to show that the contribution of agent 1 jobs to the completion time of $J^{(2)}_j$, for each $j \in \{1,\ldots,k\}$, is at most~$y_j$. We know that this contribution is $\sum_{i=1}^{x_j}p_i^{(1)}$. Since we are concerned only with feasible solutions where the constraints on variables $y_i,y_{i1}\ldots,y_{in}$ are met, we have
\begin{multline*}
y_j \,\geq\, \sum^{n}_{i=1} y_{ij} \,\geq\, \sum^{n}_{i=1} \max\{0, (x_j - i + 1)\cdot (p_i^{(1)} - p_{i- 1}^{(1)})\} \,=\, \sum^{x_j}_{i=1} (x_j - i + 1)\cdot (p_i^{(1)} - p_{i- 1}^{(1)})\\
\,\,=\,\, x_jp^{(1)}_1 + (x_j-1)(p^{(1)}_2-p^{(1)}_1) + \cdots + (p^{(1)}_{x_j}-p^{(1)}_{x_j -1}) \,\,=\,\, \sum_{i=1}^{x_j}p_i^{(1)}.
\end{multline*}
\qed
\end{proof}

To summarize, due to Lemma~\ref{lemma: SPT}, we can solve the $1 \,\big\vert\, \sum C_{j}^{(1)} \leq A_1, \, \sum w_{j}^{(2)} C_{j}^{(2)} \leq A_2 \,\big\vert\,$ problem by solving $O(k!)$ MILP formulations, each of which has only $k$ integer variables. Correctness of each of these formulations follows from Lemmas~\ref{lemma: agent 1 bound} and~\ref{lemma: bob bound}, and the analysis above. Using Lenstra's result~\cite{Len83}, we therefore obtain the following theorem:
\begin{theorem}
\label{theorem: unit weights2}
$1 \,\big\vert\, \sum C_{j}^{(1)} \leq A_1, \, \sum w_{j}^{(2)} C_{j}^{(2)} \leq A_2 \,\big\vert\,$ is fixed-parameter tractable with respect to $k$.
\end{theorem}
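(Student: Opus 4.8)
The plan is to assemble the pieces already established into a single fixed-parameter algorithm, using Lenstra's theorem~\cite{Len83} as the black box that turns a bounded-dimension MILP into an FPT subroutine. First I would invoke Lemma~\ref{lemma: SPT} to restrict attention to schedules in which the jobs of agent~1 appear in SPT order; this fixes the relative order of the $n$ jobs of agent~1 once and for all, so that the only remaining degrees of freedom are the order in which agent~2's $k$ jobs are processed and the way these jobs are interleaved into the fixed agent-1 sequence. Since there are only $k!$ possible orderings of agent~2's jobs, and $k!$ depends on $k$ alone, I would branch over all of them and declare the instance a YES-instance precisely when at least one branch admits a feasible interleaving.

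For a single fixed ordering $J_1^{(2)},\ldots,J_k^{(2)}$ of agent~2, I would encode the interleaving problem as the MILP constructed above. The integer variables are exactly $x_1,\ldots,x_k$, where $x_j$ counts the agent-1 jobs placed before $J_j^{(2)}$; constraints~(\ref{eqn: c1}) and~(\ref{eqn: c2}) force them to be monotone integers in $\{0,\ldots,n\}$, so that every integer assignment corresponds to a genuine interleaving and vice versa. The agent-1 bound is captured exactly by the linear constraint~(\ref{eqn: alice bound}) via Lemma~\ref{lemma: agent 1 bound}, while the agent-2 bound is captured by~(\ref{eqn: bob bound}) together with the auxiliary real variables $y_j$ and $y_{ij}$ and their defining inequalities, as justified by Lemma~\ref{lemma: bob bound}. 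Crucially, this MILP uses only the $k$ integer variables $x_1,\ldots,x_k$; all $y$-variables are real-valued and there are only $O(nk)$ of them.

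The heart of the argument is to verify that feasibility of this MILP is \emph{equivalent} to the existence of a schedule, for this fixed agent-2 ordering, meeting both bounds. The forward direction is routine: given such a schedule, set each $x_j$ to the true count of preceding agent-1 jobs and each $y_{ij}$ to its lower bound, so that $y_j=\sum_{i=1}^{x_j}p_i^{(1)}$ exactly, whence all constraints hold by Lemmas~\ref{lemma: agent 1 bound} and~\ref{lemma: bob bound}. The step I expect to need the most care is the reverse direction, since the $y_j$ only \emph{upper-bound} the true contribution of agent~1 to the completion time of $J_j^{(2)}$, so a priori the MILP might be feasible with slack $y_j>\sum_{i=1}^{x_j}p_i^{(1)}$. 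The point is that this over-approximation is one-sided and therefore safe: any monotone integer assignment $x_1,\ldots,x_k$ defines a concrete interleaving whose genuine agent-2 objective equals $\bigl(\sum_{i=1}^k w_i^{(2)}\sum_{j=1}^i p_j^{(2)}\bigr)+\bigl(\sum_{j=1}^k w_j^{(2)}\sum_{i=1}^{x_j}p_i^{(1)}\bigr)$, and since the weights are positive and $y_j\ge\sum_{i=1}^{x_j}p_i^{(1)}$, this quantity is bounded by the left-hand side of~(\ref{eqn: bob bound}), hence by $A_2$; the agent-1 objective is met exactly through~(\ref{eqn: alice bound}). Thus no spurious feasibility is introduced, and the equivalence holds.

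Finally, I would bound the running time. By Lenstra's theorem~\cite{Len83}, deciding feasibility of an MILP with $k$ integer variables, and arbitrarily many real variables and constraints, takes $f(k)\cdot s^{O(1)}$ time, where $s$ is the encoding size of the program; here $s$ is polynomial in $n$ and in the bit-lengths of the input numbers. Running this subroutine over each of the $k!$ orderings yields a total running time of $k!\cdot f(k)\cdot n^{O(1)}$, which is of the required form $g(k)\cdot n^{O(1)}$ for a computable $g$, establishing fixed-parameter tractability with respect to $k$ and completing the proof.
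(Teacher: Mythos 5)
Your proposal is correct and follows essentially the same route as the paper: reduce to SPT order via Lemma~\ref{lemma: SPT}, branch over the $k!$ orderings of agent~2's jobs, encode each interleaving subproblem as the MILP with integer variables $x_1,\ldots,x_k$ and auxiliary real variables $y_j,y_{ij}$, and invoke Lenstra's theorem. Your explicit treatment of the reverse direction --- that the one-sided slack $y_j\ge\sum_{i=1}^{x_j}p_i^{(1)}$ cannot create spurious feasibility because the weights are positive --- is exactly the point underlying Lemma~\ref{lemma: bob bound}, spelled out slightly more carefully than in the paper.
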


\subsection{An FPT algorithm for $1 \,\big\vert\, p_{j}^{(1)}=1, \sum w_{j}^{(1)}C_{j}^{(1)} \leq A_1, \, \sum w_{j}^{(2)}C_{j}^{(2)} \leq A_2 \,\big\vert\,$}
\label{section: CC3}%
%%%%%%%%%%%%%%%%%%%%%%%%%%%%%%%%%%%%%%%%%%%%%%%%%%%%%%%%%%%%%%%%%%%%%%%%%%%%%%%%%%%%%%%%%%%%%%%%%%%%%%%%%%%%%%%%%%%%%%%%%%%%%%%%%%%%%%%%%%%%%%%%%
%%%%%%%%%% An FPT algorithm for $1 | p_{j}^{(1)}=p_{j}^{(2)}=1, \sum w_{j}^{(1)}C_{j}^{(1)} \leq A_1, \, \sum w_{j}^{(2)}C_{j}^{(2)} \leq A_2 |
%%%%%%%%%%%%%%%%%%%%%%%%%%%%%%%%%%%%%%%%%%%%%%%%%%%%%%%%%%%%%%%%%%%%%%%%%%%%%%%%%%%%%%%%%%%%%%%%%%%%%%%%%%%%%%%%%%%%%%%%%%%%%%%%%%%%%%%%%%%%%%%%%%

The problem of determining whether there exists a schedule where both agents meet their respective bounds on their total weighted completion times was shown to be NP-complete even if the jobs of both agents all have unit processing times~\cite{Shabtay}. Here we complement this result, as well as the result given in Theorem~\ref{theorem: single job bob}, by showing that the problem is FPT with respect to $k$ when the jobs of the first agent have unit processing times. The following lemma is crucial for the construction of the FPT algorithm, and can be proven by a simple pairwise interchange argument.
\begin{lemma}
\label{lemma: sorting}
If there is a feasible schedule for the $1 \,\big\vert\, p_{j}^{(1)}=1, \sum w_{j}^{(1)}C_{j}^{(1)} \leq A_1, \, \sum w_{j}^{(2)}C_{j}^{(2)} \leq A_2 \,\big\vert\,$ problem, then there is a feasible schedule for the problem where the first agent jobs are scheduled in a non-increasing weight order.
\end{lemma}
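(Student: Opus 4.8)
The plan is to prove this by a pairwise interchange (exchange) argument. Starting from an arbitrary feasible schedule $\sigma$, I would show that whenever two jobs of agent~1 appear out of order with respect to their weights, they can be swapped without destroying feasibility, while strictly decreasing a suitable potential; after finitely many such swaps one reaches a feasible schedule in which the agent~1 jobs are listed in non-increasing weight order.

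First I would isolate the structural fact made available by the unit processing times: since every job of agent~1 occupies a time interval of length exactly~$1$, the multiset of time slots used by agent~1 jobs is fixed by $\sigma$, independently of which job sits in which slot. Concretely, if $J_a^{(1)}$ completes at time $C_a$ and $J_b^{(1)}$ completes at a later time $C_b$ in $\sigma$, with no other agent~1 job scheduled strictly between them, then exchanging the two jobs (placing $J_b^{(1)}$ in the earlier unit slot ending at $C_a$ and $J_a^{(1)}$ in the later unit slot ending at $C_b$) leaves the completion time of every other job unchanged. This holds precisely because both slots have length~$1$, so all intervening material, including any jobs of agent~2, keeps its position on the timeline.

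Next I would analyze the effect of such an exchange on the two objectives. Since no completion time of any agent~2 job changes, the value $\sum w_j^{(2)} C_j^{(2)}$ is unaffected and the bound $A_2$ remains satisfied. For agent~1, the only change is in the contribution of the two swapped jobs: before the swap it is $w_a^{(1)} C_a + w_b^{(1)} C_b$, and afterwards it is $w_b^{(1)} C_a + w_a^{(1)} C_b$, so the net change equals $(w_b^{(1)} - w_a^{(1)})(C_a - C_b)$. Hence, whenever the earlier-completing job has the strictly smaller weight, i.e.\ $w_a^{(1)} < w_b^{(1)}$, this quantity is negative (as $C_a < C_b$), so the swap does not increase $\sum w_j^{(1)} C_j^{(1)}$ and the bound $A_1$ is preserved as well.

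Finally I would supply a termination argument. Define the number of inversions of a schedule as the number of pairs of agent~1 jobs for which the earlier-completing job has strictly smaller weight. If a feasible schedule is not in non-increasing weight order, then some pair of agent~1 jobs that are consecutive within the agent~1 subsequence forms an inversion, and swapping them as above produces a feasible schedule with exactly one fewer inversion. Since the inversion count is a nonnegative integer, this process terminates, yielding a feasible schedule with the agent~1 jobs in non-increasing weight order. The only genuinely delicate point is the structural claim of the second paragraph, namely verifying that unit processing times make the exchange \emph{local} so that agent~2's completion times are truly preserved; once that observation is secured, the remainder is the routine bookkeeping of a standard exchange argument.
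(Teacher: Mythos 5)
Your proof is correct and is exactly the pairwise interchange argument the paper has in mind (the paper only asserts that the lemma ``can be proven by a simple pairwise interchange argument'' without spelling it out). Your write-up correctly identifies the one point that needs care --- that unit processing times for agent~1 make the swap of two adjacent agent-1 jobs local, leaving all agent-2 completion times untouched --- and the inversion-counting termination argument is sound.
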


Assume, without loss of generality, that $w^{(1)}_1 \geq \cdots \geq w^{(1)}_n$.  Accordingly, based on Lemma \ref{lemma: sorting}, we can restrict our search for a feasible schedule to those schedules in which job $J^{(1)}_j$ is scheduled before $J^{(1)}_{j+1}$ for all $j \in \{1,\ldots,n-1\}$. As in the
proof of Theorem~\ref{theorem: unit weights2}, by allowing an additional multiplicative factor of $k!$ to the running time of our algorithm, we can focus on a reduced subproblem where the ordering of the second agent jobs is also predefined. Given such an ordering, we renumber the second agent jobs according to the ordering.

In what follows we prove that each subproblem is FPT with respect to $k$. The proof uses the same ideas as the proof of Theorem~\ref{theorem: unit weights2} and thus is briefly presented.  Here as well we define variables $x_1,\ldots,x_k$, but this time $x_j$ represents the number of the first agent jobs that are scheduled \textit{after} $J^{(2)}_j$. Accordingly, we have to include the constraint in (\ref{eqn: c1}) and for $1 \leq j \leq k-1$ the following constraint as well:
\begin{equation}
\label{eqn: c7}
x_j \geq x_{j+1}.
\end{equation}

The bound on the weighted sum of completion times of the second agent jobs can be expressed by
\begin{equation}
\label{eqn: c7}
\left(\sum^k_{i=1} w_i^{(2)} \sum^i_{j=1} p^{(2)}_j\right)  + \left(\sum^k_{j=1}w_j^{(2)}(n-x_j)\right) \leq A_2,
\end{equation}
where the first term in the left-hand side is the weighted sum of completion times of the second agent jobs if they are scheduled one after the other at the beginning of the schedule, and the second term in the left-hand side corresponds to the contribution of the first agent jobs to the weighted sum of completion times of the second agent jobs.

To bound the total weighted completion time of the first agent jobs, we need again to introduce a set of real-valued variables $y_j,y_{j1},\ldots,y_{jn}$ for each $j \in \{1\ldots,k\}$. Here, variable $y_j$ is meant to encode the contribution of $J^{(2)}_j$ to the total weighted completion time of the first agent jobs. Note that this is precisely $p^{(2)}_j\sum^n_{i=n-x_j+1}w^{(1)}_i$. We use the variables $y_{ij}$ to encode lower-bounds on the steps of the sum $\sum^n_{i=n-x_j+1}w^{(1)}_i$, as done in the proof of Theorem~\ref{theorem: unit weights2}. Accordingly, for $j \in \{1,\ldots,k\}$, we include the set of constraints in (\ref{eqn: c6}). Moreover, for $i \in \{1,\ldots,n\}$ and $j \in \{1,\ldots,k\}$, we include the set of constraints in (\ref{eqn: c4}) and also the following set of constraints
\begin{equation}
\label{eqn: c8}
y_{ij}\geq (x_j-n+i)(w^{(1)}_{i} - w^{(1)}_{i+1}),
\end{equation}
where $w_{n+1}^{(1)}=0$ by definition. Finally, we encode the bound on the total weighted completion time of the first agent jobs by
\begin{equation}
\label{eqn: c8}
\left(\sum^{n}_{j=1} jw_j^{(1)}\right) + \left(\sum^k_{j=1} p^{(2)}_j y_j\right) \leq A_1,
\end{equation}
where the first term in the left-hand side is the weighted sum of completion time of the first agent jobs if they are scheduled one after the other at the beginning of the schedule, and the second term in the left-hand side corresponds to the contribution of the second agent jobs to the weighted sum of completion times of the first agent jobs.

%\begin{figure}
%\centering
%% \scalebox{0.8}{
%\begin{tikzpicture}
%\draw[->,semithick](0,0)--(4,0);
%\draw[->, semithick](0,0)--(0,3);
%\draw(-.5,3)node {$w$};
%\draw[fill=gray!30](1,.4) rectangle (3,0.8);
%\draw(0,0) rectangle (0.5,2.5);
%\draw[fill=gray](0.5,0)rectangle (1,2.8);
%\draw (0.75,-.25) node{$J_j^{(2)}$};
%%\draw (0.75,-.5) node{$a_i=k$};
%\draw(1,0) rectangle (1.5,1.8);
%\draw(1.5,0) rectangle (2,1.5);
%\draw(2,0) rectangle (2.5,1);
%\draw (2.75,-.25)node{$J_j^{(1)}$};
%\draw (3.25,-.25);%node{$J_1^A$};
%\draw(2.5,0) rectangle (3,0.8);
%\draw(3,0) rectangle (3.5,0.4);
%\draw[->](3.5,0.9)--(2.75,0.6);
%\draw (5,1.1) node{$y_{ij}\geq (x_j-n+i)(w^{(1)}_{i} - w^{(1)}_{i+1})$};
%\end{tikzpicture}
%% }
%\caption{Inserting job $J^{(2)}_j$ increases the weighted completion time of each following job $J^{(1)}_i$ by $p^{(2)}_j w^{(1)}_i$. Thus, the contribution of $J^{(2)}_j$ to the
%total weighted completion time of the first agent jobs is $p^{(2)}_j\sum^n_{i=n-x_j+1}w^{(1)}_i$. The variables $y_{ij}$ are used to lower-bound the steps of the sum $\sum^n_{i=n-x_j+1} w^{(1)}_i$.}
%\label{pic:unitProcT}
%\end{figure}

\begin{theorem}
\label{theorem: unit processing}
The $1 \,\big\vert\, p_{j}^{(1)}=1, \sum w_{j}^{(1)}C_{j}^{(1)} \leq A_1, \, \sum w_{j}^{(2)}C_{j}^{(2)} \leq A_2 \,\big\vert\,$ problem is fixed-parameter tractable with respect to $k$.
\end{theorem}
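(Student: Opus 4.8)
The plan is to mirror the argument behind Theorem~\ref{theorem: unit weights2}, assembling the ingredients already laid out above into a single mixed integer linear program for each job ordering of agent~2, and then invoking Lenstra's theorem. First I would apply Lemma~\ref{lemma: sorting} to fix the relative order of the first agent's jobs to be non-increasing in weight, so that the only remaining freedom is how to interleave agent~2's jobs among them. As in the proof of Theorem~\ref{theorem: unit weights2}, I would branch over all $k!$ orderings of the second agent's jobs, renumbering them accordingly; this contributes only the multiplicative factor $k!$, which depends on $k$ alone, to the overall running time.

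For each fixed ordering I would collect the constraints already derived: the box constraints~(\ref{eqn: c1}) on the integer variables $x_1,\ldots,x_k$ together with the monotonicity requirement $x_j \geq x_{j+1}$, the second agent bound~(\ref{eqn: c7}), the nonnegativity~(\ref{eqn: c4}) and aggregation~(\ref{eqn: c6}) constraints on the real-valued auxiliary variables $y_{ij}$ and $y_j$, the step constraints $y_{ij} \geq (x_j - n + i)(w^{(1)}_i - w^{(1)}_{i+1})$, and finally the first agent bound~(\ref{eqn: c8}). The resulting system is an MILP whose only integer variables are $x_1,\ldots,x_k$, since all of the $y$-variables may be relaxed to the reals. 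As the number of integer variables is exactly $k$, Lenstra's result~\cite{Len83} guarantees that testing feasibility of each such program is fixed-parameter tractable in $k$, and the scheduling instance is a yes-instance precisely when at least one of the $k!$ programs is feasible.

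The step I expect to require the most care is verifying that the auxiliary variables faithfully encode the contribution of agent~2's jobs to agent~1's weighted completion time, exactly as in Lemma~\ref{lemma: bob bound}. Because the first agent's jobs are sorted by non-increasing weight and have unit length, the $x_j$ jobs scheduled after $J^{(2)}_j$ are precisely those carrying the smallest weights $w^{(1)}_{n-x_j+1},\ldots,w^{(1)}_n$, each delayed by $p^{(2)}_j$. I would confirm that in any feasible solution the step constraints force $y_j \geq \sum_{i=1}^n \max\{0,(x_j-n+i)(w^{(1)}_i - w^{(1)}_{i+1})\}$, and that this quantity telescopes to exactly the tail sum $\sum_{i=n-x_j+1}^n w^{(1)}_i$, using $w^{(1)}_{n+1}=0$. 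Since the first agent bound is an upper bound and each $y_j$ enters~(\ref{eqn: c8}) with the nonnegative coefficient $p^{(2)}_j$, it suffices that $y_j$ lower-bounds this tail sum: a genuine feasible schedule yields a feasible MILP solution by setting each $y_j$ to its intended value, while conversely any feasible MILP solution has $y_j$ at least equal to the true tail sum, so the term $\sum_{j=1}^k p^{(2)}_j y_j$ is no smaller than agent~1's actual contribution and the real schedule therefore also respects $A_1$. This direction check, together with the analogous verification of~(\ref{eqn: c7}) for the second agent, completes the correctness argument.
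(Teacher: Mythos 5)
Your proposal follows the paper's own proof essentially verbatim: fix agent~1's order by Lemma~\ref{lemma: sorting}, branch over the $k!$ orderings of agent~2's jobs, assemble the MILP from constraints~(\ref{eqn: c1}), $x_j \geq x_{j+1}$, (\ref{eqn: c4}), (\ref{eqn: c6}), the step constraints with $w^{(1)}_{n+1}=0$, and the two bound constraints, then invoke Lenstra's theorem on the $k$ integer variables. Your telescoping verification that $\sum_i \max\{0,(x_j-n+i)(w^{(1)}_i-w^{(1)}_{i+1})\}$ equals the tail sum $\sum_{i=n-x_j+1}^{n} w^{(1)}_i$ is exactly the correctness argument the paper leaves implicit, so the proof is correct and matches the paper's route.
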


\subsection{The $1 \,\big\vert\, \sum C_{j}^{(1)} \leq A_1, \, \sum w_{j}^{(2)} U_{j}^{(2)} \leq A_2 \,\big\vert\,$ problem}
%%%%%%%%%%%%%%%%%%%%%%%%%%%%%%%%%%%%%%%%%%%%%%%%%%%%%%%%%%%%%%%%%%%%%%%%%
%%%%%%%%%% Unit Weights and Unit Processing Times
%%%%%%%%%%%%%%%%%%%%%%%%%%%%%%%%%%%%%%%%%%%%%%%%%%%%%%%%%%%%%%%%%%%%%%%%%

Ng  \emph{et al.}~\cite{Ng06} and  Leung  \emph{et al.}~\cite{Leungetal2010} proved that the $1 \,\big\vert\, \sum C_{j}^{(1)} \leq A_1, \, \sum U_{j}^{(2)} \leq A_2 \,\big\vert\,$ problem is NP-complete. We next prove that the more general $1 \,\big\vert\, \sum C_{j}^{(1)} \leq A_1, \, \sum w_{j}^{(2)}U_{j}^{(2)} \leq A_2 \,\big\vert\,$ problem is FPT with respect to $k$. Our proof depends on the following easy-to-prove lemma (recall the definitions of $\mathcal{E}^{(i)}$ and $\mathcal{T}^{(i)}$ in Section~\ref{section: pre}):
\begin{lemma}
\label{structure}
If there is a feasible solution for an instance of the $1 \,\big\vert\, \sum C_{j}^{(1)} \leq A_1, \, \sum w_{j}^{(2)}  U_{j}^{(2)} \leq A_2 \,\big\vert\,$ problem, then for the same instance there exists a feasible solution in which $(i)$ the jobs of agent 1 are scheduled in a non-decreasing order of $p_{j}^{(1)}$ (i.e., according to the SPT rule); $(ii)$ the jobs in $\mathcal{E}^{(2)}$ are scheduled in a non-decreasing order of $d_{j}^{(2)}$ (i.e., according to the EDD rule); and $(iii)$ the jobs in $\mathcal{T}^{(2)}$ are scheduled last in an arbitrary order.
\end{lemma}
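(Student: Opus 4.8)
The plan is to start from an arbitrary feasible schedule $\sigma$ and repeatedly apply local interchanges, each of which preserves feasibility — that is, never increases $\sum C_j^{(1)}$ and never turns an early job of agent~2 into a tardy one (so $\sum w_j^{(2)} U_j^{(2)}$ cannot increase) — until all three structural properties hold at once. As a preprocessing step I would first remove all idle time by shifting jobs to the left; this only decreases completion times, hence helps agent~1 and can only move tardy agent-2 jobs into the early set, so feasibility is kept. I would then establish the three properties in the order $(iii)$, $(i)$, $(ii)$. For property~$(iii)$: as long as some $J_t^{(2)}\in\mathcal{T}^{(2)}$ is not at the very end, move it to the end; every job originally following it shifts earlier by $p_t^{(2)}$, so $\sum C_j^{(1)}$ cannot increase and no early agent-2 job becomes tardy, while $J_t^{(2)}$ itself stays tardy. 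After finitely many moves the \emph{prefix} (everything before the tardy block) contains exactly the agent-1 jobs together with the jobs of $\mathcal{E}^{(2)}$.

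For property~$(i)$ I would sort the prefix into SPT order by a standard adjacent-pair argument: if two agent-1 jobs that are consecutive among agent-1 jobs appear as $J_a^{(1)},J_b^{(1)}$ with $p_a^{(1)}>p_b^{(1)}$, swap them while leaving the intervening block of early agent-2 jobs in place. A direct computation shows the two completion times sum to $p_b^{(1)}-p_a^{(1)}<0$ less, so $\sum C_j^{(1)}$ strictly decreases, while the intervening early agent-2 jobs only shift earlier and therefore stay early; jobs outside the swapped segment are untouched since its total length is unchanged.

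The main obstacle is property~$(ii)$, and it is precisely where the two objectives interact. Reordering two early agent-2 jobs is harmless only when they are truly adjacent, but in general a block $B$ of (already SPT-ordered) agent-1 jobs lies between them, and naively swapping the two agent-2 jobs shifts $B$ and can \emph{increase} $\sum C_j^{(1)}$. To avoid this I would repair a violation $J_a^{(2)},B,J_b^{(2)}$ with $d_a^{(2)}>d_b^{(2)}$ in two safe moves. First slide $J_a^{(2)}$ to the right across the whole block, turning $J_a^{(2)},B,J_b^{(2)}$ into $B,J_a^{(2)},J_b^{(2)}$: this shifts every job of $B$ earlier (so $\sum C_j^{(1)}$ cannot increase and the SPT order inside $B$ is preserved), and since $J_a^{(2)}$ then completes exactly where $J_b^{(2)}$ did minus $p_b^{(2)}$, while $J_b^{(2)}$ was early with $d_b^{(2)}<d_a^{(2)}$, the job $J_a^{(2)}$ stays early. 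Now $J_a^{(2)}$ and $J_b^{(2)}$ are adjacent, so a plain swap puts them in EDD order; this leaves every agent-1 job and every job outside the pair untouched, and the inequality $d_b^{(2)}<d_a^{(2)}$ together with the earliness of $J_b^{(2)}$ keeps both early.

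Finally, for termination I would track the number of EDD-inversions among the jobs of $\mathcal{E}^{(2)}$: each repair in property~$(ii)$ deletes exactly one inversion and creates none, because the relative order of every other pair of early agent-2 jobs is unchanged, and none of the moves ever increases $\sum C_j^{(1)}$ or disturbs properties~$(i)$ and~$(iii)$. Hence after finitely many interchanges all three properties hold in a schedule that is still feasible, which is exactly the claim of the lemma.
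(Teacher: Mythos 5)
Your proof is correct, and it is exactly the kind of pairwise-interchange argument the paper has in mind: the authors state this as an ``easy-to-prove lemma'' without writing out a proof, relying on the same exchange reasoning they cite for Lemma~\ref{lemma: SPT}. Your write-up simply makes the interchanges (including the slightly delicate two-step repair for the EDD property and the termination measure) explicit, so no further comparison is needed.
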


Consider now the $1 \,\big\vert\, \sum C_{j}^{(1)} \leq A_1, \, \sum w_{j}^{(2)}  U_{j}^{(2)} \leq A_2 \,\big\vert\,$ problem and define a set of $2^k$ subproblems corresponding to the $O(2^k)$ possible ways to partition set $\mathcal{J}^{(2)}$ into $\mathcal{E}^{(2)}$ and $\mathcal{T}^{(2)}$ such that the condition $\sum_{J_j^{(2)}\in \mathcal{T}^{(2)}}w_{j}^{(2)}\leq A_2$ holds. Due to Lemma~\ref{structure}, each subproblem reduces to an instance of the $1 \,\big\vert\, \sum C_{j}^{(1)} \leq A_1, \, L^{(2)}_{\max} \leq 0  \,\big\vert\,$ problem which includes the $n$ jobs of agent 1 and only the $O(k)$ early jobs of agent 2. In the reduced subproblem, we need to find if it is possible to schedule the jobs in $\mathcal{J}^{(1)}\cup \mathcal{E}^{(2)}$ such that all jobs in $\mathcal{E}^{(2)}$ are indeed early (\emph{i.e.}, completed not later than its due date), and $\sum C_{j}^{(1)}\leq A_1$. Thus, the fact that the $1 \,\big\vert\, \sum C_{j}^{(1)} \leq A_1, \, L^{(2)}_{\max} \leq 0  \,\big\vert\,$ problem is solvable in $O(n+k)=O(n)$ time (see Yuan \textit{et al.} ~\cite{Yuanetal}) leads to the following theorem:

\begin{theorem}
\label{theorem: single job bob3}
The $1 \,\big\vert\, \sum C_{j}^{(1)} \leq A_1, \, \sum w_{j}^{(2)} U_{j}^{(2)} \leq A_2 \,\big\vert\,$ problem is solvable in $O(2^kn)$ time.
\end{theorem}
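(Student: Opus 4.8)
The plan is to combine the structure lemma (Lemma~\ref{structure}) with an exhaustive enumeration over the early/tardy status of agent~2's $k$ jobs, reducing each resulting case to a single-agent problem for which a linear-time algorithm is already known. First I would argue completeness of the enumeration: by Lemma~\ref{structure}, whenever a feasible schedule exists, one exists in which agent~1's jobs follow the SPT rule, the early jobs of agent~2 follow EDD, and the tardy jobs of agent~2 are appended at the very end. In any such canonical schedule, the induced partition of $\mathcal{J}^{(2)}$ into its early set $\mathcal{E}^{(2)}$ and tardy set $\mathcal{T}^{(2)}$ is one of the $2^k$ possible partitions. Hence it suffices to iterate over all $2^k$ candidate partitions and recover a feasible schedule from at least one of them whenever one exists.

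Next I would handle a single fixed partition $(\mathcal{E}^{(2)}, \mathcal{T}^{(2)})$. I would immediately discard it unless the agent~2 bound is already met by the tardy set alone, i.e.\ unless $\sum_{J_j^{(2)} \in \mathcal{T}^{(2)}} w_j^{(2)} \leq A_2$; this check costs only $O(k)$ time, and it is the only constraint the tardy jobs impose, since jobs in $\mathcal{E}^{(2)}$ are required to be early and therefore contribute $0$ to $\sum w_j^{(2)} U_j^{(2)}$. Because the tardy jobs are placed last, they exert no influence on agent~1's total completion time, so the guess is realizable exactly when the remaining jobs $\mathcal{J}^{(1)} \cup \mathcal{E}^{(2)}$ can be scheduled so that every job of $\mathcal{E}^{(2)}$ completes by its due date while $\sum C_j^{(1)} \leq A_1$. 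This is precisely an instance of the $1 \,\big\vert\, \sum C_{j}^{(1)} \leq A_1, \, L^{(2)}_{\max} \leq 0 \,\big\vert\,$ problem on $n$ jobs of agent~1 and the $O(k)$ early jobs of agent~2, which by Yuan \emph{et al.}~\cite{Yuanetal} is solvable in $O(n+k)=O(n)$ time.

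Finally I would combine the pieces: there are $2^k$ subproblems, each solved in $O(n)$ time (with the $O(k)$ weight check absorbed into $O(n)$ since $k \leq n$), yielding the claimed $O(2^k n)$ running time. The main obstacle I expect is verifying the equivalence in the reduction step rigorously rather than the timing bookkeeping: one must check both directions, namely that any schedule returned for the reduced $L^{(2)}_{\max} \leq 0$ instance extends, by appending $\mathcal{T}^{(2)}$ at the end, to a feasible schedule of the original instance with objective values $\sum C_j^{(1)} \leq A_1$ and $\sum w_j^{(2)} U_j^{(2)} = \sum_{J_j^{(2)} \in \mathcal{T}^{(2)}} w_j^{(2)} \leq A_2$, and conversely that the canonical schedule guaranteed by Lemma~\ref{structure} restricts, for its induced partition, to a genuinely feasible instance of the reduced problem. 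Once this correspondence is established, correctness follows directly and the remaining argument is routine.
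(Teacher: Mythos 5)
Your proposal is correct and follows essentially the same route as the paper: enumerate the $2^k$ partitions of $\mathcal{J}^{(2)}$ into early and tardy sets satisfying $\sum_{J_j^{(2)}\in \mathcal{T}^{(2)}}w_{j}^{(2)}\leq A_2$, and reduce each, via Lemma~\ref{structure}, to an $O(n)$-time instance of $1 \,\big\vert\, \sum C_{j}^{(1)} \leq A_1, \, L^{(2)}_{\max} \leq 0 \,\big\vert\,$ solved by Yuan \emph{et al.} The equivalence you flag as the main obstacle is indeed the content of Lemma~\ref{structure} together with the observation that appending the tardy jobs last affects neither objective, exactly as in the paper.
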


\subsection{Intractability of the $1 \,\big\vert\, \sum C_{j}^{(1)} \leq A_1, \, \sum E_{j}^{(2)} \geq A_2 \,\big\vert\,$ problem}
\label{section: CE1}%
%%%%%%%%%%%%%%%%%%%%%%%%%%%%%%%%%%%%%%%%%%%%%%%%%%%%%%%%%%%%%%%%%%%%%%%%%
%%%%%%%%%% Unit Weights
%%%%%%%%%%%%%%%%%%%%%%%%%%%%%%%%%%%%%%%%%%%%%%%%%%%%%%%%%%%%%%%%%%%%%%%%%

Consider an instance of the $1 \,\big\vert\, \mathbb{C}^{(1)}, \, \sum E_{j}^{(2)} \geq A_2 \,\big\vert\,$ problem, with $k=1$ and $A_2=1$. If there is a feasible solution for such an instance, then the single job of agent 2 is scheduled in a JIT mode, i.e., during time interval $(d_{1}^{(2)}-p_{1}^{(2)},d_{1}^{(2)}]$. Thus, such an instance is equivalent to an instance of a $1 \,\big\vert\, \mathbb{C}^{(1)} \, \big\vert\,$ problem with a single non-availability interval (more commonly denoted by $1 \,\big\vert\, \mathbb{C}^{(1)},n-a\, \big\vert\,$). This problem is known to be NP-complete when $\mathbb{C}^{(1)}=\sum C_{j}^{(1)} \leq A_1$ (see Adiri \emph{et al.}~\cite{ACTAI::AdiriBFK1989} and Lee and Liman~\cite{ACTAI::LeeL1992}). Thus, we have the following corollary:

%In contrast to the positive results presented in the previous sections for the variants where agent 2 criteria is either weighted sum of completion times or weighted number of jobs, here we show that when his criteria is the (non-weighted) number of JIT jobs, the problem becomes intractable.
\begin{corollary}
\label{theorem: hardness1}
The $1 \,\big\vert\, \sum C_{j}^{(1)} \leq A_1, \, \sum E_{j}^{(2)} \geq A_2 \,\big\vert\,$ problem is \textnormal{NP}-complete even for $k=1$.
\end{corollary}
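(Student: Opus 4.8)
The plan is to establish \textbf{NP}-completeness by turning the equivalence sketched above into an explicit polynomial-time reduction from the single-machine total-completion-time problem with one non-availability interval, $1 \,\big\vert\, \sum C_j,\, n\text{-}a \,\big\vert\,$, which is known to be \textbf{NP}-complete by Adiri \emph{et al.}~\cite{ACTAI::AdiriBFK1989} and Lee and Liman~\cite{ACTAI::LeeL1992}. Membership in \textbf{NP} is immediate: a candidate schedule, specified by the completion times of all $n+1$ jobs, has polynomial size, and both the bound $\sum C_j^{(1)} \leq A_1$ and the single just-in-time condition $E_1^{(2)} = 1$ can be verified in polynomial time.

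For the hardness direction I would start from an arbitrary instance of $1 \,\big\vert\, \sum C_j,\, n\text{-}a \,\big\vert\,$ consisting of $n$ jobs with integer processing times $p_1,\ldots,p_n$, a non-availability interval $(s,t]$ with integer endpoints, and a target $A$ on the total completion time. I construct a two-agent instance with $k=1$ by assigning agent~1 exactly these $n$ jobs (with $p_j^{(1)}=p_j$) and giving agent~2 the single job $J_1^{(2)}$ with $p_1^{(2)}=t-s$ and $d_1^{(2)}=t$; I then set $A_2=1$ and $A_1=A$.

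The heart of the argument is the \emph{forcing} step. Since $\sum E_j^{(2)} \geq A_2 = 1$ and agent~2 owns only one job, every feasible schedule must have that job just-in-time, i.e. $C_1^{(2)} = d_1^{(2)} = t$. By non-preemption the job then occupies exactly the interval $(d_1^{(2)} - p_1^{(2)}, d_1^{(2)}] = (s,t]$, which no agent-1 job may overlap. Thus the agent-1 jobs are confined to the complement of $(s,t]$ — precisely the freedom available in the non-availability model — and $\sum C_j^{(1)} \leq A_1$ holds if and only if the corresponding arrangement of the original jobs around the forbidden interval $(s,t]$ attains total completion time at most $A$. This gives a feasibility-preserving correspondence between the two instances and completes the reduction.

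The step I expect to require the most care is verifying that this correspondence is tight in \emph{both} directions and that the two scheduling models genuinely coincide. Specifically, I must confirm that non-preemption in the two-agent setting forbids any agent-1 job from straddling the fixed interval $(s,t]$ exactly as a non-availability interval does, while still allowing agent-1 jobs to be placed arbitrarily before and after it; I should also check the boundary details, namely that $p_1^{(2)} = t-s$ is a positive integer (guaranteed by integrality of the source instance) and that any idle time before $s$ is handled consistently, so that an optimal source schedule maps to a feasible two-agent schedule and conversely. Once this correspondence is pinned down, \textbf{NP}-hardness already for $k=1$ follows, and together with membership in \textbf{NP} the corollary is established.
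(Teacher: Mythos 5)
Your proposal is correct and follows essentially the same route as the paper: both observe that with $k=1$ and $A_2=1$ the single agent-2 job is forced to occupy exactly the interval $(d_1^{(2)}-p_1^{(2)},d_1^{(2)}]$, making the problem equivalent to $1\,\big\vert\, \sum C_j, n\text{-}a\,\big\vert\,$, whose NP-completeness is cited from Adiri \emph{et al.} and Lee and Liman. You merely spell out the reduction direction and the forcing argument more explicitly than the paper does.
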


\section{Weighted Number of Tardy Jobs}
\label{section:WeightedNumberTardyJobs}
%%%%%%%%%%%%%%%%%%%%%%%%%%%%%%%%%%%%%%%%%%%%%%%%%%%%%%%%%%%%%%%%%%%%%%%%%
%%%%%%%%%% Weighted Number of Tardy Jobs
%%%%%%%%%%%%%%%%%%%%%%%%%%%%%%%%%%%%%%%%%%%%%%%%%%%%%%%%%%%%%%%%%%%%%%%%%

In this section we study variants of our problem of the form $1 \,\big\vert\, \sum w_{j}^{(1)}U_{j}^{(1)} \leq A_1, \, \mathbb{C}^{(2)}  \,\big\vert\,$. That is, variants where the scheduling criteria of agent 1 is the weighted number of tardy jobs. Note that already the single agent $1\,\big\vert\, \sum w_{j}U_{j} \leq A \,\big\vert\,$ problem is NP-complete, even when all due dates are equal (a resulting dating back to Karp's seminal NP-completeness paper~\cite{Kar72}). Therefore, any variant of our problem when the jobs of the first agent have weights is hard.
\begin{corollary}
\label{thm: SigmaWUhardness}
The $1 \,\big\vert\, d_j=d, \, \sum w_{j}^{(1)}U_{j}^{(1)} \leq A_1, \, \mathbb{C}^{(2)}  \,\big\vert\,$ problem is \textnormal{NP}-complete for $k \geq 0$.
\end{corollary}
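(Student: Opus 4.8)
The plan is to derive the claim directly from the hardness of the single-agent problem, using the fact that the parameter value $k=0$ already produces exactly that single-agent problem. Membership in NP is immediate: a schedule of all jobs is a certificate of polynomial size, and checking that both agent bounds hold takes polynomial time. For hardness, I would invoke the result recalled in the paragraph preceding the corollary, namely that the single-agent problem $1\,\big\vert\, d_j=d,\, \sum w_{j}U_{j} \leq A \,\big\vert\,$ is NP-complete by Karp~\cite{Kar72}; with a common due date $d$ a job is on time precisely when it, together with everything scheduled before it, finishes by $d$, so the problem is essentially \textsc{Knapsack}.

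The core observation is that when agent 2 has no jobs, every sum defining $\mathbb{C}^{(2)}$ ranges over the empty set, so the agent 2 bound can be made vacuously satisfiable (take $A_2=0$ for the maximization criterion $\sum E_{j}^{(2)} \geq A_2$, and any nonnegative $A_2$ for the two minimization criteria). Hence a $1\,\big\vert\, d_j=d,\, \sum w_{j}^{(1)}U_{j}^{(1)} \leq A_1,\, \mathbb{C}^{(2)} \,\big\vert\,$ instance with $k=0$ is feasible if and only if the underlying single-agent instance is, which already establishes NP-completeness for $k=0$. To obtain the statement for every fixed $k\geq 0$ rather than only the boundary case, I would keep the single-agent jobs as agent 1's jobs and append $k$ dummy jobs for agent 2 that are always scheduled after all agent 1 jobs; these leave every agent 1 completion time unchanged, so the agent 1 subproblem is untouched. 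The agent 2 bound is then chosen so its constraint holds automatically: large due dates for the dummy jobs when the criterion is $\sum w_{j}^{(2)}U_{j}^{(2)} \leq A_2$, a generous (polynomially bounded) $A_2$ when it is $\sum w_{j}^{(2)}C_{j}^{(2)} \leq A_2$, and $A_2=0$ when it is $\sum E_{j}^{(2)} \geq A_2$.

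I do not expect a genuine obstacle here; the proof is essentially a specialization argument. The only point that requires a moment's care is the just-in-time criterion, where the maximization direction means we must \emph{demand nothing} of agent 2 (by setting $A_2=0$) rather than satisfy a loose upper bound as in the other two cases. Since the problem is thus NP-hard already for the constant parameter value $k=0$, no algorithm running in $f(k)\cdot n^{O(1)}$ time can exist unless P$=$NP, so the problem is not fixed-parameter tractable with respect to $k$.
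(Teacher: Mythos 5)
Your proposal is correct and matches the paper's (implicit) argument: the paper justifies this corollary in one line by observing that the single-agent problem $1\,\big\vert\, d_j=d \,\big\vert\, \sum w_j U_j$ is already NP-complete by Karp's reduction from \textsc{Knapsack}, so the two-agent problem inherits hardness as a special case. Your explicit handling of the $k\geq 1$ cases via dummy agent-2 jobs with vacuous bounds is a reasonable elaboration of the same specialization argument and introduces no new ideas or gaps.
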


Due to Theorem~\ref{thm: SigmaWUhardness}, we restrict our analysis below to the unweighed $1 \,\big\vert\, \sum U_{j}^{(1)} \leq A_1, \, \mathbb{C}^{(2)}  \,\big\vert\,$ problem. We provide a fixed-parameter algorithm for the case where the criteria of agent 2 is also the (weighted) number of tardy jobs. This algorithm is then extended to the case where the jobs of agent 1 may have arbitrary weights, but both agents jobs have unit processing times. On the contrary, when the criteria for agent 2 is the weighted number of JIT jobs, we show that the problem is intractable already for highly restrictive special cases. We do not know whether the problem is fixed-parameter tractable when the criteria for agent~2 is the total weighted completion time; in the case, we can only show an $n^{O(k)}$-time algorithm. %We also provide an XP algorithm when the criteria for agent 2 is the total weighted completion time. However, the question whether in this case the problem is fixed parameterized tractable or not is left open.

\subsection{An $n^{O(k)}$-time algorithm for the $1 \,\big\vert\, \sum U_{j}^{(1)} \leq A_1, \, \sum w_{j}^{(2)} C_{j}^{(2)} \leq A_2 \,\big\vert\,$ problem}
%%%%%%%%%%%%%%%%%%%%%%%%%%%%%%%%%%%%%%%%%%%%%%%%%%%%%%%%%%%%%%%%%%%%%%%%%
%%%%%%%%%% Number of Tardy Jobs with Total weighted completion Time
%%%%%%%%%%%%%%%%%%%%%%%%%%%%%%%%%%%%%%%%%%%%%%%%%%%%%%%%%%%%%%%%%%%%%%%%%

The question whether the $1 \,\big\vert\, \sum U_{j}^{(1)} \leq A_1, \, \ \sum w_{j}^{(2)}C_{j}^{(2)} \leq A_2 \,\big\vert\,$ problem is fixed-parameter tractable or not remains an open question. Nevertheless, we show below that the problem can be solved in much slower but still non-trivial $n^{O(k)}$ time. This leads to the conclusion that the problem belongs to the parameterized class XP (see \emph{e.g.}~\cite{DF99} for a formal definition), and is solvable in polynomial time when $k$ is upper bounded by a constant. We begin with the following easy lemma.

\begin{lemma}
\label{structure3}
If there is a feasible solution for an instance of the $1 \,\big\vert\, \sum U_{j}^{(1)} \leq A_1, \, \ \sum w_{j}^{(2)}C_{j}^{(2)} \leq A_2 \,\big\vert\,$ problem, then for the same instance there exists a feasible solution in which (i) $\sum U_{j}^{(1)}=A_1$;  (ii) the jobs in $\mathcal{E}^{(1)}\cup \mathcal{J}^{(2)}$ are scheduled first followed by the jobs in $\mathcal{T}^{(1)}$ that are scheduled last in an arbitrary order; and (iii) the jobs in $\mathcal{E}^{(1)}$ are scheduled according to the EDD rule (i.e., in non-decreasing order of $d_{j}^{(1)}$).
\end{lemma}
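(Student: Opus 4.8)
The plan is to take an arbitrary feasible schedule and transform it, through a short sequence of interchange arguments, into one satisfying (i)--(iii), making sure that every transformation preserves both bounds $\sum U_j^{(1)} \le A_1$ and $\sum w_j^{(2)} C_j^{(2)} \le A_2$. I would establish the three properties in the order (ii), (i), (iii). For (ii), I would simply move every tardy agent-1 job to the very end of the schedule. A tardy job relocated to the end stays tardy, since its completion time can only grow; and every job originally scheduled after it is advanced by its processing time, so the early agent-1 jobs remain early and the completion times of the agent-2 jobs only decrease. Hence both bounds are preserved and the jobs of $\mathcal{T}^{(1)}$ end up last, as required.

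For (i), observe that enlarging the final tardy block never hurts: if fewer than $A_1$ agent-1 jobs are tardy, I would move early agent-1 jobs (those completing latest) to the end one by one; exactly as in (ii) this only advances the agent-2 jobs and keeps the remaining early jobs early, so feasibility is maintained, and we may assume that exactly $A_1$ jobs occupy the final block. (In the degenerate case where the due dates are so large that no schedule makes $A_1$ jobs tardy, the agent-1 bound is vacuous and any $A_1$ jobs may be designated as the last block.) The interesting part is (iii). Here I would argue by extremality: among all feasible schedules already satisfying (i) and (ii), I fix one, call it $\sigma$, that minimizes $\sum w_j^{(2)} C_j^{(2)}$, breaking ties by the number of EDD-inversions among the early agent-1 jobs, and then show that $\sigma$ has no such inversion.

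Suppose for contradiction that two early agent-1 jobs $J^{(1)}_a, J^{(1)}_b$ with $d^{(1)}_a > d^{(1)}_b$ appear with $J^{(1)}_a$ before $J^{(1)}_b$; I choose them consecutive within $\mathcal{E}^{(1)}$, so that only agent-2 jobs, forming a block $B$, lie between them. If $B$ is empty the two jobs are adjacent and a plain swap repairs the inversion: both remain early, and since nothing lies between them no agent-2 completion time changes, contradicting the minimal inversion count. If $B$ is nonempty, I would instead rotate the contiguous segment $J^{(1)}_a, B, J^{(1)}_b$ into the order $B, J^{(1)}_b, J^{(1)}_a$. This segment occupies the same time interval, so everything outside it is untouched; the job $J^{(1)}_b$ and every job of $B$ move earlier, while $J^{(1)}_a$ now finishes exactly where $J^{(1)}_b$ used to finish, i.e.\ at the old completion time of $J^{(1)}_b$, which is at most $d^{(1)}_b < d^{(1)}_a$, so all three remain early. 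But the jobs of $B$ are advanced by $p^{(1)}_a \ge 1$, which strictly decreases $\sum w_j^{(2)} C_j^{(2)}$ and contradicts the minimality of $\sigma$. Hence no inversion survives, and the early agent-1 jobs are in EDD order.

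The step I expect to be the main obstacle is exactly (iii). A naive pairwise swap of two out-of-EDD-order early agent-1 jobs that have agent-2 jobs between them shifts those agent-2 jobs by $p^{(1)}_b - p^{(1)}_a$, which may \emph{increase} agent 2's weighted completion time and break its bound; so the textbook adjacent-interchange argument does not go through directly. The idea that resolves this --- replacing the swap by the rotation $J^{(1)}_a, B, J^{(1)}_b \mapsto B, J^{(1)}_b, J^{(1)}_a$, which simultaneously repairs the inversion (because the looser-deadline job $J^{(1)}_a$ can safely inherit the slot of the tighter-deadline job $J^{(1)}_b$) and \emph{advances} the intervening agent-2 block --- together with the extremal choice of $\sigma$, is the crux of the proof.
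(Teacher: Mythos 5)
Your proof is correct, and it is essentially the exchange argument the paper has in mind: Lemma~\ref{structure3} is stated there without proof as an ``easy lemma,'' so there is nothing to compare against beyond the standard interchange reasoning, which you carry out in full --- including the rotation $J_a^{(1)}, B, J_b^{(1)} \mapsto B, J_b^{(1)}, J_a^{(1)}$, which is exactly the right fix for the failure of the naive adjacent swap when agent-2 jobs sit between two EDD-inverted early agent-1 jobs, since it both repairs the inversion and only advances the intervening block. Your caveat about part (i) is also well taken: when due dates are so large that $A_1$ agent-1 jobs cannot all be made tardy, the equality $\sum U_{j}^{(1)} = A_1$ is literally unattainable, and the lemma must be read as merely designating a block of $A_1$ jobs to be scheduled last, which is all the subsequent $n^{O(k)}$-time algorithm actually uses.
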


Following Lemma~\ref{structure3}, we renumber the jobs in $\mathcal{J}^{(1)}$ according to the EDD rule. Furthermore, we divide the original problem into $k!$ instances, each of which represent a different processing order of the jobs in $\mathcal{J}^{(2)}$. Consider a given instance, and assume that the jobs in $\mathcal{J}^{(2)}$ are numbered according to their processing order. For each such instance, we consider all possible partitions of $\mathcal{J}^{(1)}$ into $k+1$ subsets, $\mathcal{J}^{(1)} = \mathcal{J}_0^{(1)} \cup \cdots \cup \mathcal{J}_k^{(1)}$, and all possible sets of $k+1$ integers $\{e_0,\ldots,e_k\}$ with $e_i \leq |\mathcal{J}_i^{(1)}|$ for each $i \in \{0,\ldots,k+1\}$ and $\sum_{i=0}^{k} e_i=n-A_1$.

Note that there are $O(n^{2k+2})$ such pairs $(\{\mathcal{J}_0^{(1)},\ldots,\mathcal{J}_k^{(1)}\}, \{e_0,\ldots,e_k\})$. Given such a pair, we are looking for a restricted schedule that satisfies the following three conditions: ($i$) there are at most $e_i$ early jobs among the jobs in $\mathcal{J}_i^{(1)}$ for each $i=0,...,k$; ($ii$) job $J^{2}_{i+1}$ is scheduled right after the $e_i$ early jobs within $\mathcal{J}_i^{(1)}$ for $i=0,...,k-1$; and $(iii)$ $\sum w_{j}^{(2)}C_{j}^{(2)} \leq A_2$. Note that the instance corresponding to the particular processing order of $\mathcal{J}^{(2)}$ has a feasible schedule iff such a restricted schedule corresponding to some pair $(\{\mathcal{J}_0^{(1)},\ldots,\mathcal{J}_k^{(1)}\}, \{e_0,\ldots,e_k\})$ exists. Below we show how to compute a restricted schedule, if it exists, in $O(n\log n)$ time. This will yield the following theorem:

\begin{theorem}
\label{theorem: ucuc}
The $1 \,\big\vert\, \sum U_{j}^{(1)} \leq A_1, \, \sum w_{j}^{(2)} C_{j}^{(2)} \leq A_2 \,\big\vert\,$ problem is solvable in $O(k!n^{2k+1}\log n)$ time.
\end{theorem}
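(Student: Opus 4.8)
The plan is to carry out exactly the reduction already set up before the theorem and to supply the one missing ingredient, namely the $O(n\log n)$ routine for a single pair. By Lemma~\ref{structure3} it suffices to search among structured schedules, and these are captured by the $k!$ orderings of $\mathcal{J}^{(2)}$ together with the pairs $(\{\mathcal{J}_0^{(1)},\ldots,\mathcal{J}_k^{(1)}\},\{e_0,\ldots,e_k\})$. The relevant pairs number $O(n^{2k})$: a contiguous partition of the EDD-sorted list $\mathcal{J}^{(1)}$ into $k+1$ blocks is fixed by $k$ cut points, giving $O(n^{k})$ choices, and the count vectors $(e_0,\ldots,e_k)$ of fixed sum $n-A_1$ give another $O(n^{k})$. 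Fixing one ordering and one such pair, the only remaining task is to decide in $O(n\log n)$ time whether the corresponding restricted schedule exists; multiplying the three factors then yields the claimed bound $O(k!\,n^{2k+1}\log n)$. So the whole argument reduces to this one subroutine, and the $A_1$ unselected agent-$1$ jobs placed last are exactly the tardy ones, so condition $\sum U^{(1)}_j\le A_1$ is met automatically.

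First I would make the dependence of agent~$2$'s objective on the chosen early jobs explicit. In a restricted schedule, $J^{(2)}_j$ completes at $C^{(2)}_j=\sum_{l=0}^{j-1}P_l+\sum_{l=1}^{j}p^{(2)}_l$, where $P_l$ denotes the total processing time of the (exactly $e_l$) early jobs selected from block $\mathcal{J}^{(1)}_l$. Weighting and summing gives
\begin{equation*}
\sum_{j=1}^{k}w^{(2)}_jC^{(2)}_j=\Gamma+\sum_{l=0}^{k-1}W_lP_l,\qquad W_l=\sum_{j=l+1}^{k}w^{(2)}_j,
\end{equation*}
where $\Gamma$ is a constant independent of the selection and each $W_l>0$. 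Hence the pair admits a feasible schedule if and only if the minimum of $\sum_{l}W_lP_l$ over all valid selections is at most $A_2-\Gamma$, so I must \emph{minimize} this weighted sum, not merely test one selection.

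Next I would show that this minimum is computed by a single left-to-right sweep over the blocks. A selection in block $l$ constrains later blocks only through the start time $S_l=\sum_{l'<l}P_{l'}+\sum_{l'\le l}p^{(2)}_{l'}$ at which block $l$'s early jobs begin; and because the family of on-time sets is \emph{downward closed} (deleting a job only decreases later completion times), the minimum total processing $f_l(S_l)$ needed to place exactly $e_l$ on-time jobs of block $l$ starting at $S_l$ is non-decreasing in $S_l$. Therefore greedily taking the minimum $P_0$, then the minimum $P_1$ for the resulting $S_1$, and so on, simultaneously minimizes every $P_l$ (shrinking an earlier block can only relax a later one), and thus minimizes $\sum_l W_lP_l$ since all $W_l$ are positive; if some block cannot supply $e_l$ on-time jobs, the pair is infeasible.

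The heart of the matter, and the step I expect to be the main obstacle, is the per-block routine: given a contiguous EDD list, a start time $S_l$, and a target $e_l$, compute the minimum total processing of an on-time selection of exactly $e_l$ jobs. I would use a Moore--Hodgson--style scan: process the block in EDD order, keep a max-heap of the tentatively selected processing times, and whenever the running sum exceeds the current shifted due date, evict the largest selected job. The key lemma to establish is the strengthening that the resulting set $H_l$ is \emph{universally optimal}, i.e.\ for every $c\le|H_l|$ the $c$ smallest processing times in $H_l$ form a minimum-processing on-time set of size $c$; granting this, $f_l(S_l)$ is simply the sum of the $e_l$ smallest elements of $H_l$, and the pair is infeasible precisely when $|H_l|<e_l$. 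Proving this nestedness by an exchange argument is the delicate point, since for arbitrary processing times the on-time sets do \emph{not} form a matroid and the naive dynamic program over (index, count) costs $\Theta(n^2)$ per pair. Once it is in place, each block costs $O(|\mathcal{J}^{(1)}_l|\log n)$, a full pair costs $O(n\log n)$, and the running-time analysis is complete.
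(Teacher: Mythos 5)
Your proposal follows essentially the same route as the paper: the same $k!\cdot O(n^{2k})$ enumeration of agent-2 orderings together with (contiguous partition, count-vector) pairs licensed by Lemma~\ref{structure3}, reduced to a per-pair $O(n\log n)$ subroutine built on Moore's algorithm. The additional structure you make explicit --- the decomposition $\Gamma+\sum_l W_l P_l$, the monotonicity argument justifying the left-to-right greedy composition of blocks, and the nestedness (``universal optimality'') property of the Moore--Hodgson set needed to extract a minimum-processing on-time set of prescribed size --- is precisely what the paper compresses into treating the blocks as ``disjoint'' instances and invoking ``a slight modification of the classical algorithm of Moore,'' so the one lemma you flag as delicate is also the one the paper asserts without proof.
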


Consider some pair $(\{\mathcal{J}_0^{(1)},\ldots,\mathcal{J}_k^{(1)}\}, \{e_0,\ldots,e_k\})$ as above. Note that if our goal was only to find a schedule for there are at most $e_i$ early jobs in $\mathcal{J}_i^{(1)}$ for each $i$, then this translates to solving $k$ disjoint instances of the single agent $1 \,\big\vert\, \sum U_{j} \leq n_i-e_i \,\big\vert\,$ problem over each set of jobs $\mathcal{J}_i^{(1)}$, where $n_i = |\mathcal{J}_i^{(1)}|$. Each such instance can be solved in $O(n_i \lg n_i)$ time by a slight modification of the classical algorithm of Moore~\cite{Moore}, which gives us a total of $O(n \log n)$ time for all $k+1$ instances. Furthermore, Moore's algorithm computes the schedule with minimum makespan (\emph{i.e.}, final completion time) amongst all schedules with at most $e_i$ tardy jobs. Thus, composing the $k+1$ schedules into a single schedule for both agents by scheduling $J_{i+1}^{(2)}$ after the final job scheduled in $\mathcal{J}_i^{(1)}$, gives us a schedule which minimizes $\sum w_{j}^{(2)}C_{j}^{(2)}$ over all schedules which satisfy properties $(i)$ and $(ii)$ above. Thus, in $O(n \lg n)$ time we can determine whether there exists a restricted schedule corresponding to $(\{\mathcal{J}_0^{(1)},\ldots,\mathcal{J}_k^{(1)}\}, \{e_0,\ldots,e_k\})$, and so Theorem~\ref{theorem: ucuc} holds.

We mention that this algorithm can slightly be improved if the jobs of agent 2 have unit weights. In this case, we know that it is optimal to order these jobs  in a non-decreasing order of $p_j^{(2)}$, \emph{i.e.}, according to the shortest processing time (SPT) rule. Thus, we do not have to try out all possible orderings of $\mathcal{J}^{(2)}$, reducing the time complexity of the algorithm above by a factor of $O(k!)$.

\begin{corollary}
\label{corollary: ucuc}
The $1 \,\big\vert\, \sum U_{j}^{(1)} \leq A_1, \, \sum C_{j}^{(2)} \leq A_2 \,\big\vert\,$ problem is solvable in $O(n^{2k+1}\log n)$ time.
\end{corollary}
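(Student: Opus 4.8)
The plan is to show that, when the jobs of agent~2 all carry unit weights, a \emph{single} processing order of $\mathcal{J}^{(2)}$ already suffices, so that the factor of $O(k!)$ arising from enumerating all orders in Theorem~\ref{theorem: ucuc} can simply be dropped. Concretely, I would augment Lemma~\ref{structure3} with a fourth structural property: among all feasible solutions there is one that, in addition to satisfying $(i)$--$(iii)$, also processes the jobs of agent~2 in non-decreasing order of $p_j^{(2)}$ (the SPT rule). Once this is in place, every remaining step in the proof of Theorem~\ref{theorem: ucuc} goes through verbatim for this one fixed order, and the claimed running time follows.

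The core of the argument, and the step I expect to be the main obstacle, is establishing the SPT property by a pairwise interchange, since the two agent-2 jobs being compared need not be adjacent: a block of early agent-1 jobs may sit between them, and a careless swap either pushes those agent-1 jobs past their due dates or increases $\sum C_j^{(2)}$. To get around this, I would start from any feasible schedule obeying Lemma~\ref{structure3} and consider two jobs $J_a^{(2)}$ and $J_b^{(2)}$ that are consecutive within the subsequence of agent-2 jobs, with $J_a^{(2)}$ preceding $J_b^{(2)}$ and $p_a^{(2)} > p_b^{(2)}$, separated by a (possibly empty) block $B$ of early agent-1 jobs of total length $\beta$. The key move is to exchange \emph{only} the two agent-2 jobs while leaving $B$ fixed in place, yielding the local order $J_b^{(2)}, B, J_a^{(2)}$. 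Each job of $B$ is now preceded by the shorter job $J_b^{(2)}$ rather than by $J_a^{(2)}$, so its completion time decreases by $p_a^{(2)} - p_b^{(2)} > 0$; hence every early agent-1 job stays early and property $(i)$ is preserved. Since the block $\{J_a^{(2)}\} \cup B \cup \{J_b^{(2)}\}$ retains its total length $p_a^{(2)} + \beta + p_b^{(2)}$, no job scheduled outside it is affected. A direct computation gives $C_a^{(2)} + C_b^{(2)}$ a net decrease of exactly $p_a^{(2)} - p_b^{(2)}$, so $\sum C_j^{(2)} \leq A_2$ is maintained. Iterating such swaps sorts the agent-2 jobs into SPT order while keeping all of $(i)$--$(iii)$ intact, which proves the strengthened lemma.

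Given the SPT property, I would no longer partition the instance into $k!$ subproblems according to the processing order of $\mathcal{J}^{(2)}$; instead I fix the unique SPT order and run exactly the enumeration over pairs $(\{\mathcal{J}_0^{(1)},\ldots,\mathcal{J}_k^{(1)}\}, \{e_0,\ldots,e_k\})$ together with the $O(n\log n)$ subroutine based on Moore's algorithm~\cite{Moore} described after Theorem~\ref{theorem: ucuc}. Because a feasible schedule exists if and only if one exists with the agent-2 jobs in SPT order, this single pass is complete, and its running time is precisely that of Theorem~\ref{theorem: ucuc} with the $k!$ factor removed, namely $O(n^{2k+1}\log n)$, as claimed.
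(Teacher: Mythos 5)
Your proposal is correct and follows essentially the same route as the paper: the paper likewise observes that for unit weights the agent-2 jobs may be assumed to follow the SPT order, so the $k!$ enumeration of orderings in Theorem~\ref{theorem: ucuc} collapses to a single pass. Your block-preserving interchange argument (which the paper leaves implicit as a known fact) is a valid justification of that SPT property, since the swap leaves the total length of the affected segment unchanged, only decreases completion times of the intervening early agent-1 jobs, and decreases $\sum C_j^{(2)}$ by $p_a^{(2)}-p_b^{(2)}$.
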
 
\subsection{An FPT algorithm for $1 \,\big\vert\, \sum U_{j}^{(1)} \leq A_1, \, \sum w_{j}^{(2)} U_{j}^{(2)} \leq A_2 \,\big\vert\,$}
%%%%%%%%%%%%%%%%%%%%%%%%%%%%%%%%%%%%%%%%%%%%%%%%%%%%%%%%%%%%%%%%%%%%%%%%%
%%%%%%%%%% Weighted Number of Tardy Jobs
%%%%%%%%%%%%%%%%%%%%%%%%%%%%%%%%%%%%%%%%%%%%%%%%%%%%%%%%%%%%%%%%%%%%%%%%%

We next show that the $1 \,\big\vert\, \sum U_{j}^{(1)} \leq A_1, \, \sum w_{j}^{(2)} U_{j}^{(2)} \leq A_2 \,\big\vert\,$ problem is FPT with respect to $k$. Our proof depends on the following easy-to-prove lemma:
\begin{lemma}
\label{structure2}
If there is a feasible solution for an instance of the $1 \,\big\vert\, \sum U_{j}^{(1)} \leq A_1, \, \sum w_{j}^{(2)} U_{j}^{(2)} \leq A_2 \,\big\vert\,$ problem, then for the same instance there exists a feasible solution in which the jobs in $\mathcal{E}^{(1)}\cup \mathcal{E}^{(2)}$ are scheduled first according to the EDD rule (i.e., in non-decreasing order of $d_{j}^{(i)}$), followed by the jobs in $\mathcal{T}^{(1)}\cup \mathcal{T}^{(2)}$ that are scheduled last in an arbitrary order.
\end{lemma}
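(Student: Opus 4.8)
The plan is to prove the lemma by a two-stage pairwise interchange argument, where the crucial observation is that neither stage alters \emph{which} jobs end up early and which end up tardy. Since the two objectives $\sum U_{j}^{(1)}$ and $\sum w_{j}^{(2)} U_{j}^{(2)}$ depend only on the partition of the jobs into the early sets $\mathcal{E}^{(1)}, \mathcal{E}^{(2)}$ and the tardy sets $\mathcal{T}^{(1)}, \mathcal{T}^{(2)}$, and not on their internal ordering, preserving this partition automatically preserves feasibility. So I would fix a feasible schedule $\sigma$, freeze its early/tardy partition, and only permute positions.

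First I would push all tardy jobs to the end. Starting from $\sigma$, whenever a tardy job is scheduled immediately before an early job, I swap the two. Moving a tardy job later only increases its completion time, so it stays tardy; removing a job that preceded an early job only decreases that early job's completion time, so it stays early (all jobs outside the swapped pair are unaffected, since the total processing time of the pair is invariant). Iterating, I reach a schedule in which every job of $\mathcal{E}^{(1)} \cup \mathcal{E}^{(2)}$ precedes every job of $\mathcal{T}^{(1)} \cup \mathcal{T}^{(2)}$, with the same early/tardy partition, hence the same objective values. The tardy jobs may be left in an arbitrary order, since their only requirement is to complete after their already-exceeded due dates, which the final positions guarantee.

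Next I would sort the early block into EDD order. If two adjacent early jobs $J_a, J_b$ (in consecutive positions, $J_a$ first) violate EDD, i.e.\ $d_a > d_b$, I swap them. Letting $t$ be the completion time of the job just before this pair, after the swap $J_b$ completes at $t + p_b \leq t + p_a + p_b \leq d_b$, using that $J_b$ was early, so $J_b$ stays early; and $J_a$ completes at $t + p_a + p_b \leq d_b < d_a$, so $J_a$ stays early as well. All other jobs are untouched, and in particular the tardy block remains at the end. A finite sequence of such adjacent swaps (a bubble sort) reorders $\mathcal{E}^{(1)} \cup \mathcal{E}^{(2)}$ into non-decreasing due-date order while keeping every early job early.

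The argument is essentially routine, and I expect no serious obstacle; the only point needing care is that the reorderings must be applied across both agents simultaneously, treating the early (resp.\ tardy) jobs as a single pool ordered purely by due date, irrespective of agent identity or weight. This is precisely what makes the two objectives invariant: the weights $w_{j}^{(2)}$ enter only through the unchanged set $\mathcal{T}^{(2)}$, and the bound $A_1$ only through the unchanged $\mathcal{T}^{(1)}$. Combining the two stages yields a feasible schedule of the claimed form, completing the proof.
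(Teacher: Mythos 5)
Your proof is correct, and it is exactly the standard pairwise-interchange argument the paper has in mind (the paper states Lemma~\ref{structure2} without proof, calling it ``easy-to-prove''). The two-stage exchange --- first pushing tardy jobs to the back, then bubble-sorting the early block into EDD order --- together with the observation that both objectives depend only on the early/tardy partition, is precisely the intended reasoning; the only cosmetic imprecision is that an arbitrary reordering of the tardy block could in principle turn a tardy job early, but this only decreases both objectives and so does not affect feasibility.
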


Consider now an instance of the $1 \,\big\vert\, \sum U_{j}^{(1)} \leq A_1, \, \sum w_{j}^{(2)} U_{j}^{(2)} \leq A_2 \,\big\vert\,$ problem, and define a set of $2^k$ instances corresponding to the $O(2^k)$ possible ways to partition set $\mathcal{J}^{(2)}$ into $\mathcal{E}^{(2)}$ and $\mathcal{T}^{(2)}$ such that the feasibility condition $\sum_{J_j^{(2)}\in \mathcal{T}^{(2)}}w_{j}^{(2)}\leq A_2$ holds in each instance. Due to Lemma~\ref{structure2}, each of these instances is an instance of the $1\left\vert \sum U_{j}^{(1)}\leq A_1,L_{\max }^{(2)}\leq 0\right\vert$ problem in which we need to find a feasible schedule for agent 1 subject to scheduling the set of $O(k)$ jobs in $\mathcal{E}^{(2)}$ such that they are all early. Agnetis \emph{et al.} showed that the $1\left\vert \sum U_{j}^{(1)}\leq A_{1},L_{\max }^{(2)}\leq 0\right\vert$ problem is solvable in $O(n \log n +k\log k)= O(n \log n)$ time~\cite{Agnetisetal2004}. Thus, we obtain the following:

\begin{theorem}
\label{theorem: uuu}
The $1 \,\big\vert\, \sum U_{j}^{(1)} \leq A_1, \, \sum w_{j}^{(2)} U_{j}^{(2)} \leq A_2 \,\big\vert\,$ problem is solvable in $O(2^kn \log n)$ time.
\end{theorem}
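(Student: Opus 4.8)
The plan is to prove Theorem~\ref{theorem: uuu} by combining a structural normalization lemma with a reduction to a single-agent auxiliary problem whose complexity is already known. The target running time $O(2^k n \log n)$ strongly suggests an approach that branches over all $2^k$ ways of deciding, for each job of agent~2, whether it is early or tardy, and then solves each resulting subproblem in $O(n \log n)$ time. I would first invoke Lemma~\ref{structure2}, which guarantees that it suffices to search among \emph{canonical} schedules in which all early jobs (of both agents) precede all tardy jobs, with the early jobs arranged according to the EDD rule. This structural result lets me reduce the two-agent problem to a collection of much simpler subproblems.

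The main body of the argument would proceed as follows. For each of the $O(2^k)$ subsets $\mathcal{T}^{(2)} \subseteq \mathcal{J}^{(2)}$, I would first check the feasibility condition $\sum_{J_j^{(2)} \in \mathcal{T}^{(2)}} w_j^{(2)} \leq A_2$; only those subsets passing this test correspond to schedules satisfying agent~2's bound, since the jobs declared tardy contribute exactly their weights to $\sum w_j^{(2)} U_j^{(2)}$. Having fixed which jobs of agent~2 must be early, namely $\mathcal{E}^{(2)} = \mathcal{J}^{(2)} \setminus \mathcal{T}^{(2)}$, I would observe that the tardy jobs of both agents can be pushed to the very end and cause no interference, so the remaining question is purely whether we can schedule agent~1's jobs so that at most $A_1$ of them are tardy, \emph{while simultaneously} completing every job of $\mathcal{E}^{(2)}$ by its due date. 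Since requiring all jobs in $\mathcal{E}^{(2)}$ to be early is exactly the condition $L_{\max}^{(2)} \leq 0$, each surviving subproblem becomes precisely an instance of the single-agent $1 \,\big\vert\, \sum U_j^{(1)} \leq A_1, L_{\max}^{(2)} \leq 0 \,\big\vert\,$ problem, containing the $n$ jobs of agent~1 together with the $|\mathcal{E}^{(2)}| \leq k$ forced-early jobs of agent~2.

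To finish, I would cite the result of Agnetis \emph{et al.}~\cite{Agnetisetal2004} that the $1 \,\big\vert\, \sum U_j^{(1)} \leq A_1, L_{\max}^{(2)} \leq 0 \,\big\vert\,$ problem is solvable in $O(n \log n + k \log k) = O(n \log n)$ time (using $k \leq n$). Multiplying this by the $O(2^k)$ subproblems yields the claimed $O(2^k n \log n)$ bound; correctness follows because the original instance is feasible if and only if at least one subproblem (for some valid partition of $\mathcal{J}^{(2)}$) admits a feasible schedule, which in turn holds by Lemma~\ref{structure2} since any feasible schedule can be transformed into a canonical one whose early/tardy split of $\mathcal{J}^{(2)}$ is enumerated by the branching.

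The step I expect to require the most care is the justification that the subproblems exactly capture the original problem without loss. In particular, one must verify that in a canonical schedule the contribution of agent~2's tardy jobs to $\sum w_j^{(2)} U_j^{(2)}$ is \emph{independent} of the detailed arrangement of agent~1's tardy jobs, so that the feasibility test $\sum_{J_j^{(2)} \in \mathcal{T}^{(2)}} w_j^{(2)} \leq A_2$ can be decoupled from the scheduling decision for agent~1. This decoupling is precisely what Lemma~\ref{structure2} buys us: once the early/tardy status of every job is fixed and the tardy jobs are relegated to the end, the only remaining constraint linking the two agents is that the early jobs of agent~2 must meet their due dates, which is the $L_{\max}^{(2)} \leq 0$ condition handled by the cited single-agent algorithm. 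The remaining bookkeeping---checking that $A_1$ tardy jobs of agent~1 suffice---is absorbed into that same single-agent routine and needs no separate treatment.
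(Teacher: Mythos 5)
Your proposal is correct and follows essentially the same route as the paper: invoke Lemma~\ref{structure2}, enumerate the $O(2^k)$ partitions of $\mathcal{J}^{(2)}$ into early and tardy jobs satisfying $\sum_{J_j^{(2)}\in \mathcal{T}^{(2)}}w_{j}^{(2)}\leq A_2$, and solve each resulting $1 \,\big\vert\, \sum U_j^{(1)} \leq A_1, L_{\max}^{(2)} \leq 0 \,\big\vert\,$ instance in $O(n \log n)$ time via Agnetis \emph{et al.} No substantive differences.
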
 
\subsection{An FPT algorithm for $1 \,\big\vert\, p_{j}^{(1)}=p_{j}^{(2)}=1, \sum w_{j}^{(1)}U_{j}^{(1)} \leq A_1, \, \sum w_{j}^{(2)}U_{j}^{(2)} \leq A_2 \,\big\vert\,$}
%%%%%%%%%%%%%%%%%%%%%%%%%%%%%%%%%%%%%%%%%%%%%%%%%%%%%%%%%%%%%%%%%%%%%%%%%
%%%%%%%%%% Weighted Number of Tardy Jobs
%%%%%%%%%%%%%%%%%%%%%%%%%%%%%%%%%%%%%%%%%%%%%%%%%%%%%%%%%%%%%%%%%%%%%%%%%

The $1 \,\big\vert\, \sum w_{j}^{(1)}U_{j}^{(1)} \leq A_1, \, \sum w_{j}^{(2)} U_{j}^{(2)} \leq A_2 \,\big\vert\,$ problem is NP-complete even for the case of unit processing time~\cite{Shabtay}. Next we complement this result, as well as Theorem~\ref{thm: SigmaWUhardness}, by showing that this problem is FPT with respect to $k$. First observe that Lemma~\ref{structure2} holds here as well. Thus, we again create $2^{k}$ instances from our $1 \,\big\vert\, p_{j}^{(1)}=p_{j}^{(2)}=1, \sum w_{j}^{(1)}U_{j}^{(1)} \leq A_1, \, \sum w_{j}^{(2)}U_{j}^{(2)} \leq A_2 \,\big\vert\,$ instance, where in each instance the set $\mathcal{E}^{(2)} \subseteq \mathcal{J}^{(2)}$ may vary. In any given instance of these $2^k$ instances we may assume that $\sum_{J_j^{(2)}\in \mathcal{T}^{(2)}}w_{j}^{(2)}\leq A_2$. Furthermore, again due to Lemma~\ref{structure2}, each of these instances is in fact an instance of the $1 \,\big\vert\, p_{j}^{(1)}=p_{j}^{(2)}=1, \sum w_{j}^{(1)}U_{j}^{(1)} \leq A_1, \, L_{\max }^{(2)}\leq 0\,\big\vert\,$ problem in which we need to find a feasible schedule for agent 1 subject to scheduling the set of $O(k)$ jobs in $\mathcal{E}^{(2)}$ such that they are all early. This latter  problem is solvable in $O(\max\{n \log n, k\})=O(n \log n)$ time~\cite{Shabtay}. Thus, we get:
\begin{theorem}
\label{theorem: uuu2}
The $1 \,\big\vert\, p_{j}^{(1)}=p_{j}^{(2)}=1, \sum w_{j}^{(1)}U_{j}^{(1)} \leq A_1, \, \sum w_{j}^{(2)} U_{j}^{(2)} \leq A_2 \,\big\vert\,$ problem is solvable in $O(2^kn \log n)$ time.
\end{theorem}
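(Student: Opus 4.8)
The plan is to reuse the branching scheme from Theorem~\ref{theorem: uuu} almost verbatim, the only new ingredient being that the structural normal form supplied by Lemma~\ref{structure2} is insensitive both to weights on agent~1 and to the unit-processing-time restriction. First I would confirm that Lemma~\ref{structure2} applies in the present setting: the interchange argument that pushes all tardy jobs to the very end and sorts the early jobs $\mathcal{E}^{(1)}\cup\mathcal{E}^{(2)}$ in EDD order only reorders a \emph{fixed} early/tardy partition, and such a reordering changes neither agent's weighted tardy count. This remains valid when $p^{(1)}_j=p^{(2)}_j=1$ and agent~1 carries arbitrary weights, so we may restrict the search to schedules of this normal form.

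Next I would branch over the $2^k$ possible choices of the early set $\mathcal{E}^{(2)}\subseteq\mathcal{J}^{(2)}$, immediately discarding any branch whose complementary set violates $\sum_{J_j^{(2)}\in\mathcal{T}^{(2)}}w^{(2)}_j\leq A_2$. In a surviving branch the second agent's objective is already controlled: treating every job of $\mathcal{T}^{(2)}$ as tardy gives an upper bound $\sum_{J_j^{(2)}\in\mathcal{T}^{(2)}}w^{(2)}_j\leq A_2$ on its true weighted tardy count, provided every job we designated early is genuinely completed by its due date. Hence the sole remaining requirement is $L_{\max}^{(2)}\leq 0$ over the $O(k)$ jobs of $\mathcal{E}^{(2)}$, to be met while keeping agent~1's weighted number of tardy jobs at most $A_1$. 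This is precisely an instance of the $1 \,\big\vert\, p_{j}^{(1)}=p_{j}^{(2)}=1, \sum w_{j}^{(1)}U_{j}^{(1)} \leq A_1, \, L_{\max }^{(2)}\leq 0\,\big\vert\,$ problem on all of $\mathcal{J}^{(1)}$ together with the designated early jobs of agent~2.

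The decisive external fact is that this deadline-constrained problem is solvable in $O(\max\{n\log n,k\})=O(n\log n)$ time~\cite{Shabtay}; note that the unit-processing-time restriction is essential here, since the general weighted problem is already NP-hard (Corollary~\ref{thm: SigmaWUhardness}). Running this algorithm on each of the $2^k$ branches and declaring the instance feasible exactly when some branch succeeds gives the claimed $O(2^k n\log n)$ bound. Correctness follows in both directions: any feasible schedule, put in normal form via Lemma~\ref{structure2}, defines its actually-early set $\mathcal{E}^{(2)}$, which survives the filter and certifies feasibility of the matching deadline-constrained instance; conversely, a feasible schedule returned in any branch, with the $\mathcal{T}^{(2)}$ jobs appended at the end, is feasible for the original instance. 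I expect the only delicate point to be the transfer of Lemma~\ref{structure2}, namely checking that weighting agent~1 does not compromise the EDD interchange on a fixed partition; everything thereafter is the routine $2^k$ enumeration, the running-time multiplication, and a direct appeal to the cited $O(n\log n)$ subroutine.
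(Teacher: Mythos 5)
Your proposal is correct and follows essentially the same route as the paper: invoke Lemma~\ref{structure2} to justify the normal form, enumerate the $2^k$ candidate sets $\mathcal{E}^{(2)}$ with the feasibility filter $\sum_{J_j^{(2)}\in\mathcal{T}^{(2)}}w_j^{(2)}\leq A_2$, and solve each resulting $1 \,\big\vert\, p_{j}^{(1)}=p_{j}^{(2)}=1, \sum w_{j}^{(1)}U_{j}^{(1)} \leq A_1, \, L_{\max }^{(2)}\leq 0\,\big\vert\,$ instance with the cited $O(n\log n)$ subroutine. Your extra care in checking that the interchange argument tolerates agent-1 weights, and your explicit two-directional correctness argument, are details the paper leaves implicit but do not constitute a different approach.
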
 
\subsection{Intractability of the $1 \,\big\vert\, \sum U_{j}^{(1)} \leq A_1, \, \sum E_{j}^{(2)} \geq A_2 \,\big\vert\,$ problem}
%%%%%%%%%%%%%%%%%%%%%%%%%%%%%%%%%%%%%%%%%%%%%%%%%%%%%%%%%%%%%%%%%%%%%%%%%
%%%%%%%%%% Number of Tardy Jobs with JIT
%%%%%%%%%%%%%%%%%%%%%%%%%%%%%%%%%%%%%%%%%%%%%%%%%%%%%%%%%%%%%%%%%%%%%%%%%
Following the observation made in Section~\ref{section: CE1}, we can conclude that an instance of the $1 \,\big\vert\, d_j^{(1)}=d, \, \sum U_{j}^{(1)} \leq A_1, \, \ \sum E_{j}^{(2)} \geq A_2 \,\big\vert\,$ problem with $k=1$ and $A_2=1$ is equivalent to an instance of the $1 \,\big\vert\, d_j^{(1)}=d, \, \sum U_{j}^{(1)} \leq A_1, n-a \,\big\vert\,$ problem, which is known to be NP-complete (see Lee~\cite{journals/jgo/Lee96}). Thus, we have the following corollary:

%In contrast to the positive results in the above two subsections, we next show the following:
\begin{corollary}
\label{theorem: hardness2}
The $1 \,\big\vert\, d_j^{(1)}=d, \, \sum U_{j}^{(1)} \leq A_1, \, \ \sum E_{j}^{(2)} \geq A_2 \,\big\vert\,$ problem is \textnormal{NP}-complete for $k\geq1$.
\end{corollary}

%\begin{proof}
%We provide a reduction from the NP-complete \textsc{Partition} problem (see Definition \ref {Definition: Partition}). Given an instance $(X,z)$ to the \textsc{Partition} problem, we construct the following instance for the $1 \,\big\vert\, \sum U_{j}^{(1)} \leq A_1, \, \ \sum E_{j}^{(2)} \geq A_2 \,\big\vert\,$ problem: First we set $n=m$ and $k=1$. For $j=1,...,n$, we set $p^{(1)}_j = x_j$ and $d_j^{(1)}=2z+1$. Moreover, we set $p^{(2)}_1 = w^{(2)}_1=1$ and $d^{(2)}_1 = z+1$. The two bounds on the agents criteria are set to $A_1=0$ and $A_2=1$.

%To meet the bound for the second agent, job $J^{(2)}_1$ has to be scheduled in a JIT mode during time interval $(d_{1}^{(2)}-p_{1}^{(2)},d_{1}^{(2)}]=(z, z+1]$. Thus, meeting the bound of no tardy jobs for the first agent is possible \textit{iff} we are able to partition the set of jobs into two subsets, each having a processing time of exactly $z$. If this is possible, then we schedule one set of jobs during time interval $(0,z]$ and the other during time interval $(z+1,2z]$. Otherwise, at least a single job will be tardy and the first agent bound will be violated. \qed
%\end{proof} 

\section{Weighted Number of Just-in-Time Jobs}
\label{section: WeightedNumberJustinTimJobs}
%%%%%%%%%%%%%%%%%%%%%%%%%%%%%%%%%%%%%%%%%%%%%%%%%%%%%%%%%%%%%%%%%%%%%%%%%
%%%%%%%%%% Weighted Number of Just-in-Time Jobs
%%%%%%%%%%%%%%%%%%%%%%%%%%%%%%%%%%%%%%%%%%%%%%%%%%%%%%%%%%%%%%%%%%%%%%%%%

We next consider problems of the form $1 \,\big\vert\, \sum w_{j}^{(1)}E_{j}^{(1)} \geq A_1, \, \mathbb{C}^{(2)}  \,\big\vert\,$, \emph{i.e.}, problems where the criteria of agent 1 is the total weighted number of JIT jobs. Recall that a job $J_j^{(i)}$ is scheduled in JIT mode if it is scheduled precisely at the time interval $(d_j^{(i)}-p_j^{(i)},d_j^{(i)}]$. We will show that when either $\mathbb{C}^{(2)} = \sum w_{j}^{(2)} U_{j}^{(2)} \leq A_2$, or $\mathbb{C}^{(2)}= \sum w_{j}^{(2)} E_{j}^{(2)} \geq A_2$, the problem is fixed-parameter tractable in $k$. We also show that when $\mathbb{C}^{(2)} = \sum w_{j}^{(2)} C_{j}^{(2)} \leq A_2$ the problem is fixed-parameter tractable if the first agent jobs are unweighed, while the more general case (where the first agent jobs are weighted) is left open.

% $\mathbb{C}^{(2)}$ is either $\sum C_{j}^{(2)} \leq A_2$, $\sum w_{j}^{(2)} U_{j}^{(2)} \leq A_2$, or $\sum w_{j}^{(2)} E_{j}^{(2)} \geq A_2$, the problem is fixed-parameter tractable in $k$. The case where $\mathbb{C}^{(2)}= \sum w_{j}^{(2)} C_{j}^{(2)} \leq A_2$ is left open.

\subsection{The $1 \,\big\vert\, \sum E_{j}^{(1)} \geq A_1, \, \sum w_{j}^{(2)}C_{j}^{(2)} \leq A_2 \,\big\vert\,$ problem}
\label{section: EC}
%%%%%%%%%%%%%%%%%%%%%%%%%%%%%%%%%%%%%%%%%%%%%%%%%%%%%%%%%%%%%%%%%%%%%%%%%
%%%%%%%%%% E vs C
%%%%%%%%%%%%%%%%%%%%%%%%%%%%%%%%%%%%%%%%%%%%%%%%%%%%%%%%%%%%%%%%%%%%%%%%%

The $1 \,\big\vert\, \sum E_{j}^{(1)} \geq A_1, \, \sum C_{j}^{(2)} \leq A_2 \,\big\vert\,$ problem is known to be NP-complete~\cite{Sha}. We next prove that the more general $1 \,\big\vert\, \sum E_{j}^{(1)} \geq A_1, \, \sum w_{j}^{(2)}C_{j}^{(2)} \leq A_2 \,\big\vert\,$ problem is FPT with respect to $k$. As usual, we begin with an easy-to-prove lemma:
\begin{lemma}
\label{lemma: auxilary_jit1}
In any feasible schedule for the $1 \,\big\vert\, \sum E_{j}^{(1)} \geq A_1, \, \sum w_{j}^{(2)}C_{j}^{(2)} \leq A_2 \,\big\vert\,$ problem the jobs in $\mathcal{E}^{(1)}$ are scheduled in non-decreasing due date (EDD) order. Moreover, if there is a feasible solution for the problem, then there is a feasible solution in which $|\widehat{\mathcal{E}}^{(1)}|=\sum E_{j}^{(1)}=A_1$, and the jobs in set $\widehat{\mathcal{T}}^{(1)}=\mathcal{J}^{(1)}\setminus \widehat{\mathcal{E}}^{(1)}$ are scheduled last in an arbitrary order.
\end{lemma}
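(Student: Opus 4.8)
The plan is to establish Lemma~\ref{lemma: auxilary_jit1} by a standard exchange argument, proving each of its claims in turn. The lemma makes three assertions: first, that in \emph{any} feasible schedule the early (JIT) jobs of agent~1 appear in EDD order; second, that we may assume $|\widehat{\mathcal{E}}^{(1)}|=A_1$; and third, that the non-JIT jobs of agent~1 can be pushed to the end. Note the subtle point that for JIT jobs, being ``early'' means completing \emph{exactly} at the due date, so $\widehat{\mathcal{E}}^{(1)}=\mathcal{E}^{(1)}$ here, and the JIT jobs occupy fixed intervals $(d_j^{(1)}-p_j^{(1)},d_j^{(1)}]$ determined solely by their due dates.

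First I would prove the EDD claim. Consider any feasible schedule and look at two JIT jobs $J_a^{(1)}$ and $J_b^{(1)}$ of agent~1 with $d_a^{(1)}>d_b^{(1)}$, and suppose $J_a^{(1)}$ is scheduled before $J_b^{(1)}$. Since both are JIT, $J_a^{(1)}$ completes exactly at $d_a^{(1)}$ and $J_b^{(1)}$ completes exactly at $d_b^{(1)}$. But $J_a^{(1)}$ preceding $J_b^{(1)}$ forces $C_a^{(1)}\le C_b^{(1)}-p_b^{(1)}<C_b^{(1)}$, hence $d_a^{(1)}<d_b^{(1)}$, contradicting $d_a^{(1)}>d_b^{(1)}$. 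Thus any two JIT jobs must already be in EDD order, which is why this part holds for \emph{every} feasible schedule rather than merely for some transformed one. (Ties in due dates can be broken arbitrarily.)

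Next I would handle the second and third claims together, since they only require the weaker ``there exists'' form. Starting from a feasible solution, if $\sum E_j^{(1)}>A_1$, I would simply demote excess JIT jobs: pick any JIT job of agent~1 beyond the required $A_1$ and move it to the very end of the schedule, declaring it tardy. Because the objective for agent~1 is a \emph{lower} bound $\sum E_j^{(1)}\ge A_1$, reducing the JIT count down to exactly $A_1$ preserves feasibility. This reduction also realizes the third claim: every non-JIT job of agent~1 can be relocated to the tail in arbitrary order. The key observation is that removing a non-JIT agent-1 job from the interior only frees up time, which can never increase the completion time of any remaining job; in particular agent~2's weighted completion-time objective $\sum w_j^{(2)}C_j^{(2)}$ does not increase, and the JIT jobs retain their fixed intervals. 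Hence all of agent~1's tardy jobs may be collected at the end without violating either bound.

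The main obstacle, and the part deserving the most care, is verifying that moving the non-JIT jobs of agent~1 to the tail does not disturb the JIT property of the remaining early jobs or the feasibility of agent~2's bound. The JIT jobs live at absolutely fixed positions dictated by their due dates, so I must argue that the gaps they leave are exactly fillable and that pulling tardy agent-1 jobs out of the interior only shifts other jobs \emph{earlier} (or leaves them put), never later; this monotonicity is what guarantees $\sum w_j^{(2)}C_j^{(2)}$ stays within $A_2$. Once this monotonicity of completion times under job removal is made precise, the remaining steps are routine.
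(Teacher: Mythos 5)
Your proof is correct and is exactly the standard exchange argument the paper has in mind when it calls this lemma ``easy-to-prove'': JIT jobs occupy the fixed intervals $(d_j^{(1)}-p_j^{(1)},d_j^{(1)}]$, which forces the EDD order, and surplus/non-JIT agent-1 jobs can be extracted to the tail. Two small precision points: your claim that $\widehat{\mathcal{E}}^{(1)}=\mathcal{E}^{(1)}$ is not literally true (a job completing strictly before its due date lies in $\mathcal{E}^{(1)}\setminus\widehat{\mathcal{E}}^{(1)}$) --- the correct reading is simply that the EDD assertion concerns $\widehat{\mathcal{E}}^{(1)}$, which is what your argument proves and all that the subsequent dynamic program uses; and when you pull an interior non-JIT job out, you should leave its slot as idle time rather than left-shift anything, since any shift would destroy the JIT status of the remaining jobs in $\widehat{\mathcal{E}}^{(1)}$ --- with that choice every remaining completion time is unchanged and feasibility is immediate.
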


Following Lemma~\ref{lemma: auxilary_jit1}, we assume without loss of generality that the jobs in $\mathcal{J}^{(1)}$ are indexed according to the EDD rule, \emph{i.e.}, $d_1^{(1)}\leq d_2^{(1)}\leq ... \leq d_n^{(1)}$. We will show that for any fixed ordering of the jobs of agents 2, we can determine in polynomial time whether there exists a feasible schedule where the relative order of agent 2 jobs is exactly this ordering. Since there are $k!$ orderings of the jobs of agent 2, this will imply that the $1 \,\big\vert\, \sum E_{j}^{(1)} \geq A_1, \, \sum w_{j}^{(2)}C_{j}^{(2)} \leq A_2 \,\big\vert\,$ problem is FPT with respect to $k$.

Consider any fixed ordering $J^{(2)}_1,\ldots,J^{(2)}_k$ of the jobs of agent 2. For convenience purposes, let $J_{0}^{(1)}$ and $J_{n+1}^{(1)}$ be two dummy jobs of agent 1 with $p_{0}^{(1)}=w_{0}^{(1)}=d_{0}^{(1)}=p_{n+1}^{(1)}=w_{n+1}^{(1)}=0$ and $d_{n+1}^{(1)}=\infty$. Consider any given feasible schedule $\sigma$ with $J_a^{(1)}$ and $J_b^{(1)}$ being two jobs of agent 1 that are scheduled in JIT mode, with no other jobs of agent 1 scheduled in between them. Obviously, $d_b^{(1)}-d_a^{(1)}-p_b^{(1)}\geq 0$, as otherwise $J_a^{(1)}$ and $J_b^{(1)}$ cannot be both scheduled in JIT mode. For $\ell \in \{0,\ldots,k-1\}$, let $k(a,b,\ell)$ denote the number of jobs in a maximal consecutive subsequence of jobs in $\{J^{(2)}_{\ell+1},\ldots,J^{(2)}_k\}$ with total processing time not greater than $d_b^{(1)}-d_a^{(1)}-p_b^{(1)}$. Then the following lemma holds:
\begin{lemma}
\label{lemma: auxilary_jit2}
Suppose there is a feasible schedule in which $J_a^{(1)}$ and $J_b^{(1)}$ are scheduled in JIT mode with no other jobs of agent 1 scheduled between them, and there are exactly $\ell$ jobs belonging to agent 2 that are scheduled prior to $J_a^{(1)}$. Then there exists a feasible schedule where the sequence of jobs $J^{(2)}_{\ell+1},\ldots,J^{(2)}_{\ell+k(a,b,\ell)}$ is scheduled right after the completion time of job $J_a^{(1)}$, and no other jobs are scheduled prior to the completion of job $J_b^{(1)}$.
\end{lemma}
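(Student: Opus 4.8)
The plan is to prove the lemma by an exchange argument that turns any feasible schedule with the stated properties into one of the desired form, without increasing the second agent's objective or decreasing the first agent's number of JIT jobs. First I would record the structural consequences of the hypotheses. Write $G = d_b^{(1)} - d_a^{(1)} - p_b^{(1)} \ge 0$ for the length of the window available strictly between the completion of $J_a^{(1)}$ (at time $d_a^{(1)}$) and the start of $J_b^{(1)}$ (at time $d_b^{(1)} - p_b^{(1)}$); since no job of agent 1 lies between $J_a^{(1)}$ and $J_b^{(1)}$, only agent-2 jobs can occupy this window. Because the relative order of the agent-2 jobs is fixed as $J_1^{(2)},\ldots,J_k^{(2)}$ and exactly $\ell$ of them precede $J_a^{(1)}$, those must be precisely $J_1^{(2)},\ldots,J_\ell^{(2)}$, and the agent-2 jobs inside the window form a prefix $J_{\ell+1}^{(2)},\ldots,J_{\ell+m}^{(2)}$ of the remaining jobs for some $m \ge 0$. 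Since these fit into a window of length $G$, we have $\sum_{i=\ell+1}^{\ell+m} p_i^{(2)} \le G$, so $m \le k(a,b,\ell)$ by the maximality in the definition of $k(a,b,\ell)$. Finally, by Lemma~\ref{lemma: auxilary_jit1} every non-JIT agent-1 job may be assumed to be scheduled at the very end, so the only agent-1 jobs appearing before $J_b^{(1)}$ completes are JIT jobs, each anchored at its own due date.

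Next I would perform two modifications, neither of which increases any completion time. In step $(a)$, slide the window jobs $J_{\ell+1}^{(2)},\ldots,J_{\ell+m}^{(2)}$ leftward so they run back-to-back starting exactly at $d_a^{(1)}$; this only lowers their completion times and leaves $J_b^{(1)}$ and everything after it untouched, as $J_b^{(1)}$ still completes JIT at $d_b^{(1)}$. In step $(b)$, while fewer than $k(a,b,\ell)$ agent-2 jobs occupy the window, take the next job in the fixed order, $J_{\ell+m+1}^{(2)}$, which currently sits somewhere after $J_b^{(1)}$, and reschedule it in the window immediately after the jobs already placed there. This keeps the window's total processing time at $\sum_{i=\ell+1}^{\ell+m+1} p_i^{(2)} \le G$ (because $m+1 \le k(a,b,\ell)$), so the moved job finishes no later than $d_b^{(1)} - p_b^{(1)}$ and does not collide with $J_b^{(1)}$; moreover its completion time strictly decreases, since it previously finished after $d_b^{(1)}$.

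The step requiring the most care, and the main obstacle, is verifying that deleting $J_{\ell+m+1}^{(2)}$ from its original position after $J_b^{(1)}$ does no harm. Here the key observation is that every JIT job of agent 1 is pinned to its due date, so removing an earlier agent-2 job can never force such a job off its due date: at worst it opens up idle time, while any agent-2 job lying between the vacated slot and the next anchored JIT job may be shifted earlier, which again only lowers completion times. Hence neither modification increases any agent-2 completion time, so $\sum w_j^{(2)} C_j^{(2)}$ does not grow and the bound $\le A_2$ is preserved, while the set of JIT jobs of agent 1 is left unchanged, so $\sum E_j^{(1)} \ge A_1$ still holds. Iterating step $(b)$ until the window contains exactly $J_{\ell+1}^{(2)},\ldots,J_{\ell+k(a,b,\ell)}^{(2)}$ scheduled right after $J_a^{(1)}$ yields a feasible schedule of precisely the claimed form, which completes the proof.
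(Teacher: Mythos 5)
Your proof is correct; the paper states this lemma without proof, and your exchange argument (packing the maximal prefix of remaining agent-2 jobs into the idle window between the two anchored JIT jobs, which only decreases agent-2 completion times and leaves agent 1's set of JIT jobs intact) is exactly the argument the surrounding text implicitly relies on. One small simplification: when you extract $J^{(2)}_{\ell+m+1}$ from its position after $J_b^{(1)}$ you may simply leave idle time in its place, so no appeal to Lemma~\ref{lemma: auxilary_jit1} or any further shifting of later jobs is actually needed.
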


For a pair of jobs $J^{(1)}_a,J^{(1)}_b \in \mathcal{J}^{(1)}$ and $\ell \in \{0,\ldots,k-1\}$, let $w(a,b,\ell)$ denote the minimum contribution of the job sequence $J^{(2)}_{\ell+1},\ldots,J^{(2)}_{\ell+k(a,b,\ell)}$ to the total weighted completion time of agent 2 in any feasible schedule $\sigma$. If no such schedule exists, define $w(a,b,\ell) = \infty$. According to Lemma~\ref{lemma: auxilary_jit2} we can compute $w(a,b,\ell)$  by using the following formula:
\begin{equation}
\label{eqn: c11}
w(a,b,\ell) = \sum_{j=\ell+1}^{\ell+k(a,b,\ell)} w_{j}^{(2)}C_{j}^{(2)} = \sum_{j=\ell+1}^{\ell+k(a,b,\ell)} w_{j}^{(2)}\left(d_a^{(1)}+\sum_{i=\ell+1}^{j}p_{i}^{(2)}\right).
\end{equation}

Now, for $b, e \in \{0,\ldots,n+1\}$, and $\ell \in \{0,\ldots,k-1\}$, let $W(b,e,\ell)$ denote the minimum total weighted completion time of the first $\ell$ jobs of agent 2, among all partial schedules on the job set $\{J_1^{(1)},\ldots,J_b^{(1)},J_1^{(2)},\ldots,J_{\ell}^{(2)}\}$, such that $e=|\widehat{\mathcal{E}}^{(1)}|=\sum E_j^{(1)}$ and $J_{b}^{(1)}\in \widehat{\mathcal{E}}^{(1)}$ is the last job to be scheduled. (Again, set $W(b,e,\ell) = \infty$ if no such schedule exists.) Note that by definition, the completion time of such a feasible partial schedule is exactly at time $d_b^{(1)}$. The value $W(b,e,\ell)$ can be computed using the following recursion that considers all possible ways of appending partial schedules that have fewer jobs of agent 2 and one less JIT job of agent 1:
\begin{equation}
\label{eqn: 15}
W(b,e,\ell)=\min_{\substack{0\leq a \leq b-1 \\
0 \leq \ell' \leq \ell}}
\begin{cases}
W(a,e-1,\ell')+w(a,b,\ell') &  : \, d_b^{(1)}-d_a^{(1)}-p_b^{(1)} \geq 0, \, \ell=\ell'+ k(a,b,\ell').\\
\infty & : \, \text{otherwise}.
\end{cases}
\end{equation}

Our algorithm computes all possible $W(b,e,\ell)$ values for $b\in \{0,...,n+1\}$, $e\in \{0,...,A_1\}$, and $\ell\in \{0,...,k\}$, using the recursion given in Equation~\ref{eqn: 15}. For this, it computes in a preprocessing step all values $w(a,b,\ell)$ and $k(a,b,\ell)$, for $0 \leq a < b \leq n+1$ and $0 \leq \ell \leq k$. The base cases of the recursion are given by $W(0,0,0)=0$, and $W(0,e,\ell)=\infty$ for $e\neq 0$ or $\ell \neq 0$. We report that there exists a feasible schedule for our $1 \,\big\vert\, \sum E_{j}^{(1)} \geq A_1, \, \sum C_{j}^{(2)} \leq A_2 \,\big\vert\,$ instance, restricted to the relative fixed ordering $J^{(2)}_1,\ldots,J^{(2)}_k$ for the jobs of agent 2, iff $W(n+1,A_1,k) \leq A_2$.

Correctness of our algorithm is immediate from the above discussion. Let us now analyze its time complexity.  There are $k!$ ways to order the jobs of agent 2. For each such order, we compute all values $W(b,e,\ell)$, $w(a,b,\ell)$ and $k(a,b,\ell)$. All values $k(a,b,\ell)$ can be computed straightforwardly in $O(n^2k^2)$ time. Using these values, and Equation~\ref{eqn: c11}, all values $w(a,b,\ell)$ can also be computed in $O(n^2k^2)$ time. Finally, using dynamic programming along with Equation~\ref{eqn: 15}, computing all values $W(b,e,\ell)$ requires $O(n^3k^2)$ time. Thus, for each fixed ordering of $\mathcal{J}^{(2)}$, we spend a total of $O(n^3k^2)$ time. All together, this gives us an $O(k!k^2n^3)$ time algorithm.

\begin{theorem}
\label{theorem: EC1}
The $1 \,\big\vert\, \sum E_{j}^{(1)} \geq A_1, \, \sum w_{j}^{(2)}C_{j}^{(2)} \leq A_2 \,\big\vert\,$ problem is solvable in $O(k!k^2n^3)$ time.
\end{theorem}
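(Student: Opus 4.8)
The plan is to combine the two structural lemmas with an exhaustive enumeration over the relative order of agent~2's jobs and a dynamic program over the just-in-time jobs of agent~1. First I would invoke Lemma~\ref{lemma: auxilary_jit1} to restrict attention to \emph{canonical} schedules: those in which the just-in-time jobs of agent~1 appear in EDD order, exactly $A_1$ jobs of agent~1 are scheduled just-in-time, and the remaining jobs of agent~1 are pushed to the very end of the schedule where they no longer interfere with agent~2. Since the relative order of agent~2's jobs is not fixed a~priori, I would then branch over all $k!$ permutations of $J_1^{(2)},\ldots,J_k^{(2)}$; fixing one such permutation reduces the task to deciding whether a canonical schedule exists whose agent-2 jobs respect this order, and the overall instance is feasible iff some branch succeeds.

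For a fixed ordering, the heart of the argument is the dynamic program recorded in Equation~\ref{eqn: 15}. The key observation, supplied by Lemma~\ref{lemma: auxilary_jit2}, is that a canonical schedule decomposes into the gaps between consecutive just-in-time jobs $J_a^{(1)},J_b^{(1)}$ of agent~1, and that within each such gap it is optimal to place the longest feasible prefix $J^{(2)}_{\ell+1},\ldots,J^{(2)}_{\ell+k(a,b,\ell)}$ of the still-unscheduled agent-2 jobs, starting immediately at time $d_a^{(1)}$. Delaying such a job to a later gap only increases its completion time, since later gaps begin at larger due dates, so earliest maximal packing is never suboptimal for agent~2. This is precisely why the block contribution $w(a,b,\ell)$ has the closed form of Equation~\ref{eqn: c11}, and why the state $W(b,e,\ell)$ --- the minimum weighted completion time of the first $\ell$ agent-2 jobs over partial schedules on $\{J_1^{(1)},\ldots,J_b^{(1)},J_1^{(2)},\ldots,J_\ell^{(2)}\}$ that end with $J_b^{(1)}$ as the $e$-th just-in-time job --- obeys the stated recursion. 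Reading off the answer then amounts to checking $W(n+1,A_1,k)\le A_2$, where the dummy job $J_{n+1}^{(1)}$ with $d_{n+1}^{(1)}=\infty$ provides a final gap large enough to absorb every remaining agent-2 job.

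The main obstacle is proving that the recursion in Equation~\ref{eqn: 15} actually computes the claimed minimum, i.e.\ establishing optimal substructure. I would argue this by taking an optimal canonical schedule respecting the fixed order, identifying its last just-in-time job $J_b^{(1)}$ and the just-in-time job $J_a^{(1)}$ immediately preceding it, and letting $\ell'$ be the number of agent-2 jobs scheduled before $J_a^{(1)}$; by Lemma~\ref{lemma: auxilary_jit2} one may assume exactly the block $J^{(2)}_{\ell'+1},\ldots,J^{(2)}_{\ell'+k(a,b,\ell')}$ fills the $(a,b)$-gap, so $\ell=\ell'+k(a,b,\ell')$ and the prefix on $\{J_1^{(1)},\ldots,J_a^{(1)}\}$ is itself a feasible partial schedule with $e-1$ just-in-time jobs whose cost is at least $W(a,e-1,\ell')$. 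This yields the lower bound $W(b,e,\ell)\ge W(a,e-1,\ell')+w(a,b,\ell')$, and the reverse inequality follows by concatenating an optimal partial schedule realizing $W(a,e-1,\ell')$ with the greedy block; the base cases $W(0,0,0)=0$ and $W(0,e,\ell)=\infty$ otherwise anchor the induction on $b$.

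Finally, for the running time I would account, per fixed ordering, for the preprocessing of all $k(a,b,\ell)$ and then all $w(a,b,\ell)$ values in $O(n^2k^2)$ time each, the latter via Equation~\ref{eqn: c11}; and for filling the $O(n\cdot A_1\cdot k)=O(n^2k)$ table entries $W(b,e,\ell)$, each requiring a minimization over the $O(nk)$ choices of $(a,\ell')$, in $O(n^3k^2)$ time. Multiplying by the $k!$ orderings of $\mathcal{J}^{(2)}$ yields the claimed $O(k!\,k^2 n^3)$ bound of Theorem~\ref{theorem: EC1}. \qed
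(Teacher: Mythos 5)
Your proposal is correct and follows essentially the same route as the paper: restrict to canonical schedules via Lemma~\ref{lemma: auxilary_jit1}, branch over the $k!$ orderings of agent~2's jobs, use Lemma~\ref{lemma: auxilary_jit2} to justify the greedy block placement encoded in $k(a,b,\ell)$ and $w(a,b,\ell)$, and run the dynamic program of Equation~\ref{eqn: 15} with the answer read off from $W(n+1,A_1,k)\le A_2$. The running-time accounting also matches the paper's, so there is nothing to add.
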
 
\subsection{The $1 \,\big\vert\, \sum w_{j}^{(1)}E_{j}^{(1)} \geq A_1, \, \sum w_{j}^{(2)}U_{j}^{(2)} \leq A_2 \,\big\vert\,$ problem}
%%%%%%%%%%%%%%%%%%%%%%%%%%%%%%%%%%%%%%%%%%%%%%%%%%%%%%%%%%%%%%%%%%%%%%%%%
%%%%%%%%%% E vs U
%%%%%%%%%%%%%%%%%%%%%%%%%%%%%%%%%%%%%%%%%%%%%%%%%%%%%%%%%%%%%%%%%%%%%%%%%

We next show how to modify the ideas used in Section~\ref{section: EC} so that they apply to the $1 \,\big\vert\, \sum w_{j}^{(1)}E_{j}^{(1)} \geq A_1, \, \sum w_{j}^{(2)}U_{j}^{(2)} \leq A_2 \,\big\vert\,$ problem. We begin with the following analog of Lemma~\ref{lemma: auxilary_jit1}:
\begin{lemma}
\label{lemma: auxilary_jit3}
In any feasible schedule for an instance of the $1 \,\big\vert\, \sum w_{j}^{(1)}E_{j}^{(1)} \geq A_1, \, \sum w_{j}^{(2)}U_{j}^{(2)} \leq A_2 \,\big\vert\,$ problem, the jobs in $\widehat{\mathcal{E}}^{(1)} \cup \mathcal{E}^{(2)}$ are scheduled in an earliest due date (EDD) order. Moreover, if there is a feasible solution for the instance, then there is a feasible solution in which the jobs in set $\widehat{\mathcal{T}}^{(1)}\cup \mathcal{T}^{(2)}$  are scheduled last in an arbitrary order.
\end{lemma}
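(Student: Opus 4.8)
The plan is to prove both assertions by standard pairwise interchange arguments, exactly mirroring the proof of Lemma~\ref{lemma: auxilary_jit1} but now accounting for the fact that agent~2's jobs also carry due dates. I would organize the argument according to the four job types involved: the just-in-time jobs $\widehat{\mathcal{E}}^{(1)}$ of agent~1, the non-just-in-time jobs $\widehat{\mathcal{T}}^{(1)}$ of agent~1, the early (non-tardy) jobs $\mathcal{E}^{(2)}$ of agent~2, and the tardy jobs $\mathcal{T}^{(2)}$ of agent~2.

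First I would dispose of the just-in-time jobs of agent~1. Any job $J_j^{(1)} \in \widehat{\mathcal{E}}^{(1)}$ is, by definition, processed on the fixed interval $(d_j^{(1)}-p_j^{(1)},d_j^{(1)}]$, so its completion time equals $d_j^{(1)}$; hence two distinct such jobs automatically appear in strictly increasing order of their due dates, which is precisely the EDD order. For the early jobs of agent~2 I would invoke the classical EDD exchange: if two consecutive non-tardy jobs appear out of due-date order, swapping them leaves the completion time of the pair's block unchanged, and keeps the job moved to the later position non-tardy, since that job has the larger due date while its new completion time equals the block's original end time, which is at most the smaller of the two due dates. The swap disturbs no job outside the pair; in particular it leaves every just-in-time job of agent~1 in place and alters neither $\sum w_j^{(1)}E_j^{(1)}$ nor $\sum w_j^{(2)}U_j^{(2)}$. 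Iterating this exchange sorts the non-tardy jobs into EDD order.

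For the ``moreover'' part I would show that the jobs of $\widehat{\mathcal{T}}^{(1)}\cup\mathcal{T}^{(2)}$ can be pushed to the very end of the schedule without loss. A non-just-in-time job of agent~1 remains non-just-in-time when delayed, so relocating it does not decrease $\sum w_j^{(1)}E_j^{(1)}$; a tardy job of agent~2 remains tardy when delayed, so relocating it does not increase $\sum w_j^{(2)}U_j^{(2)}$. Moreover, deleting these jobs from the front can only make the remaining jobs complete no later than before: each just-in-time job of agent~1 retains its fixed interval, now possibly preceded by idle time, and each early job of agent~2 completes no later and hence stays early. Therefore both bounds remain satisfied and the transformed schedule is still feasible.

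The delicate point, and the step I expect to require the most care, is the interaction between the positionally rigid just-in-time jobs of agent~1 and the reorderable early jobs of agent~2: because a just-in-time job cannot be displaced, an exchange that would insert an early agent-2 job before such a job is not permitted, so the EDD sorting must be carried out within the idle-separated blocks delimited by the just-in-time jobs. I would therefore need to verify that the local EDD sorts performed inside these blocks are consistent with the global EDD statement over $\widehat{\mathcal{E}}^{(1)}\cup\mathcal{E}^{(2)}$, i.e.\ that reaching EDD order never forces the displacement of a fixed just-in-time job. Once this consistency is confirmed, all remaining verifications reduce to the routine single-swap feasibility checks described above.
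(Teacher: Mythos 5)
The paper itself supplies no proof of this lemma---it is stated as an ``easy-to-prove'' analog of Lemma~\ref{lemma: auxilary_jit1}---so there is nothing to compare your argument against line by line. The routine parts of your proposal are fine: every job of $\widehat{\mathcal{E}}^{(1)}$ completes exactly at its due date, so these jobs appear in EDD order in every schedule, and the ``moreover'' part follows by extracting the jobs of $\widehat{\mathcal{T}}^{(1)}\cup\mathcal{T}^{(2)}$ and appending them at the end, which leaves the intervals of the remaining jobs untouched, cannot decrease $\sum w_j^{(1)}E_j^{(1)}$, and cannot increase $\sum w_j^{(2)}U_j^{(2)}$.

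The step you flag as delicate, however, is not a verification you may safely postpone: it is precisely where the claim fails, so no interchange argument can close the gap. Consider the instance with a single agent-1 job having $p_1^{(1)}=1$, $d_1^{(1)}=11$, $w_1^{(1)}=1$ and $A_1=1$ (so this job must be JIT and occupy $(10,11]$), and two agent-2 jobs with $A_2=0$ (so both must be early): $J_1^{(2)}$ with $p_1^{(2)}=10$, $d_1^{(2)}=15$, and $J_2^{(2)}$ with $p_2^{(2)}=1$, $d_2^{(2)}=12$. The only interval on which $J_1^{(2)}$ can be early and avoid $(10,11]$ is $(0,10]$, after which the only interval on which $J_2^{(2)}$ can be early is $(11,12]$; this yields a feasible schedule, and it is the unique one. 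In it the jobs of $\widehat{\mathcal{E}}^{(1)}\cup\mathcal{E}^{(2)}$ appear with due dates $15,11,12$, so the EDD claim fails under both the literal ``any feasible schedule'' reading and the intended existential reading---indeed even the weaker assertion that the jobs of $\mathcal{E}^{(2)}$ alone may be assumed to appear in EDD order is false. The obstruction is exactly the interaction you identified: a long early agent-2 job with a late due date can be forced into the block \emph{before} a JIT job of agent~1, while a short agent-2 job with an earlier due date fits only \emph{after} that JIT job, and the immovable JIT interval blocks the swap. What your adjacent-transposition argument actually establishes is only EDD order of the agent-2 early jobs \emph{within} each block delimited by consecutive JIT jobs of agent~1; any correct version of the lemma must be weakened to a statement of that form, and the ``consistency'' you hoped to confirm for the global ordering does not hold.
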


Following Lemma~\ref{lemma: auxilary_jit3}, we assume that the jobs in $\mathcal{J}^{(1)}$ are numbered according to the EDD rule such that $d_1^{(1)}\leq d_2^{(1)}\leq ... \leq d_n^{(1)}$. We create $O(2^k)$ instances of the problem, according to all possible candidate sets for $\mathcal{E}^{(2)}$ such that $\sum_{J_j^{(2)}\notin \mathcal{E}^{(2)}}w_{j}^{(2)}\leq A_2$ holds. Each instance is in fact an instance of the $1 \,\big\vert\, \sum w_{j}^{(1)}E_{j}^{(1)} \geq A_1, \, L^{(2)}_{\max} \leq 0  \,\big\vert\,$ problem which includes the $n$ jobs of agent 1 and only the $O(k)$ early jobs of agent 2. In the reduced instance, we need to determine whether there exists a schedule where each job in $\mathcal{E}^{(2)}$ is indeed early (\emph{i.e.}, completed not later than its due date) and $\sum w_{j}^{(1)}E_{j}^{(1)}\geq A_1$. Below, we show how we can solve this problem in polynomial time.

We begin by renumbering the jobs in $\mathcal{E}^{(2)}$ such that $d_1^{(2)}\leq d_2^{(2)}\leq ... \leq d_{k'}^{(2)}$, where $k'=|\mathcal{E}^{(2)}|$. As in Section~\ref{section: EC}, we add two dummy jobs $J_{0}^{(1)}$ and $J_{n+1}^{(1)}$ with $d_{0}^{(1)}=p_{0}^{(1)}=w_{0}^{(1)}=p_{n+1}^{(1)}=w_{n+1}^{(1)}=0$ and $d_{n+1}^{(1)}=\infty$. Furthermore, we again let $k(a,b,\ell)$ denote, for $0 \leq a < b \leq n$ and $0 \leq \ell < k'$, the length of the maximal consecutive subsequence of jobs in $\{J^{(2)}_{\ell+1},\ldots,J^{(2)}_{k'}\}$ with total processing time not greater than $d_b^{(1)}-d_a^{(1)}-p_b^{(1)}$. Then Lemma~\ref{lemma: auxilary_jit2} holds here as well, and we can assume that if $J^{(1)}_a$ and $J^{(1)}_b$ are both scheduled in JIT mode, with no other agent 1 jobs between them, then $J^{(2)}_{\ell+1},\ldots,J^{(2)}_{\ell+k(a,b,\ell)}$ is scheduled right after the completion of $J_a^{(1)}$ (and no other jobs are scheduled prior to the completion of $J_b^{(1)}$).

Since no jobs of agent 2 are allowed to be late, a partial schedule that includes jobs $J_a^{(1)}$ and $J_b^{(1)}$ as two consecutive JIT jobs with $\ell'$ early jobs of agent 2 scheduled before $J_a^{(1)}$ is a feasible partial schedule with $\ell'+ k(a,b,\ell')$ early jobs of agent 2 scheduled before $J_b^{(1)}$ only if the following two conditions holds:
\begin{itemize}
\item[] \textit{Condition 1}: $d_b^{(1)}-d_a^{(1)}-p_b^{(1)}\geq 0$; and
\item[] \textit{Condition 2}: $d_a^{(1)}+\sum_{i=\ell'+1}^{j}p_{i}^{(2)}-d_j^{(2)}\leq 0$ for all $j=\ell'+1,...,l'+\mid k(a,b,\ell')\mid$.
\end{itemize}

Now, let $W(b,\ell)$ represent the maximum total weighted number of JIT jobs among all partial schedules on job set $\{J_1^{(1)},\ldots,J_b^{(1)},J_1^{(2)},\ldots,J_\ell^{(2)}\}$ where all jobs $\{J_1^{(2)},\ldots,J_\ell^{(2)}\}$ are early and $J_b^{(1)}\in \widehat{\mathcal{E}}^{(1)}$ is the last scheduled job. Note that as opposed to Section~\ref{section: EC}, here this value represents the criteria of agent 1. Each value $W(b,\ell)$ can be computed with the following recursion that considers all possible ways of appending partial schedules that have fewer early jobs of agent~2 and one less JIT job of agent~1:
\begin{equation}
\label{eqn: 155}
W(b,\ell)=\max_{\substack{0\leq a \leq b-1 \\
0 \leq \ell' \leq \ell-1}}
\begin{cases}
W(a,\ell')+w_b^{(1)} & : \, \text{conditions 1 and 2 hold and } \ell=\ell'+ |k(a,b,\ell')|.\\
-\infty & :\, \text{otherwise}.
\end{cases}
\end{equation}
The base cases for this recursion are given by $W(0,0) = 0$, and $W(0,\ell) = -\infty$ for $\ell \neq 0$.

Our algorithm reports that there exists a feasible solution to the instance of $1 \,\big\vert\, \sum w_{j}^{(1)}E_{j}^{(1)} \geq A_1, \, \sum w_{j}^{(2)}U_{j}^{(2)} \leq A_2 \,\big\vert\,$ problem iff for some set $\mathcal{E}^{(2)}$ with $\sum_{J_j^{(2)}\notin \mathcal{E}^{(2)}}w_{j}^{(2)}\leq A_2$ we have $W(n+1,|\mathcal{E}^{(2)}|) \geq A_1$. The running time of this algorithm can be bounded by $O(2^kk^2n^2)$, using a similar analysis to the one given in Section~\ref{section: EC}. Thus, we obtain:

\begin{theorem}
\label{theorem: EU1}
The $1 \,\big\vert\, \sum w_{j}^{(1)}E_{j}^{(1)} \geq A_1, \, \sum w_{j}^{(2)}U_{j}^{(2)} \leq A_2 \,\big\vert\,$ problem is solvable in $O(2^kk^2n^2)$ time.
\end{theorem}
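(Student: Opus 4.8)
The plan is to package the machinery developed immediately above into a clean correctness-and-complexity argument. Since all the structural groundwork has been laid, the proof of the theorem reduces to two tasks: (a) verifying that the dynamic program defined by Equation~\ref{eqn: 155} correctly computes the optimal agent-1 objective for each fixed early set $\mathcal{E}^{(2)}$, and (b) bounding the total running time across all $O(2^k)$ subproblems.

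First I would establish correctness of the recursion by induction on the number of agent-1 JIT jobs. By Lemma~\ref{lemma: auxilary_jit3} we may restrict attention to schedules in which the JIT jobs of agent 1 together with the forced-early jobs of $\mathcal{E}^{(2)}$ appear in EDD order, with all remaining jobs dumped at the end. By Lemma~\ref{lemma: auxilary_jit2}, between any two consecutive agent-1 JIT jobs $J_a^{(1)}$ and $J_b^{(1)}$ it is without loss of generality to insert the maximal feasible prefix $J^{(2)}_{\ell'+1},\ldots,J^{(2)}_{\ell'+k(a,b,\ell')}$ of the not-yet-scheduled agent-2 jobs immediately after $J_a^{(1)}$. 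Conditions 1 and 2 exactly encode when this insertion yields a valid partial schedule: Condition 1 guarantees $J_b^{(1)}$ can still be placed in JIT mode, while Condition 2 guarantees none of the inserted agent-2 jobs become tardy. Thus $W(b,\ell)$, the maximum weighted number of JIT agent-1 jobs over partial schedules on $\{J_1^{(1)},\ldots,J_b^{(1)},J_1^{(2)},\ldots,J_\ell^{(2)}\}$ that end with $J_b^{(1)}$ in JIT mode and keep $J_1^{(2)},\ldots,J_\ell^{(2)}$ all early, satisfies the stated recursion, with base cases $W(0,0)=0$ and $W(0,\ell)=-\infty$ for $\ell\neq 0$. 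Reporting feasibility when $W(n+1,|\mathcal{E}^{(2)}|)\geq A_1$ for some admissible $\mathcal{E}^{(2)}$ then follows because the dummy job $J_{n+1}^{(1)}$ forces every job of $\mathcal{E}^{(2)}$ to have already been scheduled.

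For the running time, I would analyze a single subproblem and then multiply by $2^k$. The values $k(a,b,\ell)$, of which there are $O(n^2k)$, can be precomputed in $O(n^2k^2)$ time by scanning prefixes of the agent-2 order, with the Condition 2 check folded into the same scan. The table $W(b,\ell)$ has $O(nk)$ entries, and each is computed by a maximization over the $O(nk)$ candidate pairs $(a,\ell')$, giving $O(n^2k^2)$ per subproblem. Multiplying by the $O(2^k)$ choices of $\mathcal{E}^{(2)}$ (only those with $\sum_{J_j^{(2)}\notin \mathcal{E}^{(2)}}w_{j}^{(2)}\leq A_2$) yields the claimed $O(2^kk^2n^2)$ bound.

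The main obstacle is the correctness of the greedy insertion rule underlying the recursion, namely the claim that packing the \emph{maximal} consecutive block $J^{(2)}_{\ell'+1},\ldots,J^{(2)}_{\ell'+k(a,b,\ell')}$ right after $J_a^{(1)}$ never sacrifices optimality. This is precisely the content of Lemma~\ref{lemma: auxilary_jit2}; the delicate point is that Condition 2 must be enforced along the entire inserted block rather than only at its endpoint, since the EDD ordering of $\mathcal{E}^{(2)}$ does not by itself guarantee that every job in the block meets its due date once offset by the completion time $d_a^{(1)}$ of $J_a^{(1)}$.
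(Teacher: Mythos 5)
Your proposal follows the paper's own argument essentially verbatim: the same $O(2^k)$ enumeration of candidate sets $\mathcal{E}^{(2)}$, the same dynamic program $W(b,\ell)$ with the maximal-block insertion justified by Lemma~\ref{lemma: auxilary_jit2} and validated by Conditions 1 and 2, and the same $O(n^2k^2)$-per-subproblem accounting yielding $O(2^kk^2n^2)$ overall. Your closing remark that Condition 2 must be checked along the entire inserted block (not just at its end) is correct and is exactly what the paper's Condition 2 enforces; there is no gap.
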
 
\subsection{The $1 \,\big\vert\, \sum w_{j}^{(1)}E_{j}^{(1)} \geq A_1, \, \sum w_{j}^{(2)}E_{j}^{(2)} \geq A_2 \,\big\vert\,$ problem}
%%%%%%%%%%%%%%%%%%%%%%%%%%%%%%%%%%%%%%%%%%%%%%%%%%%%%%%%%%%%%%%%%%%%%%%%%
%%%%%%%%%% Weighted Number of Just-in-Time Jobs
%%%%%%%%%%%%%%%%%%%%%%%%%%%%%%%%%%%%%%%%%%%%%%%%%%%%%%%%%%%%%%%%%%%%%%%%%

It is known that the $1 \,\big\vert\, \sum w_{j}^{(1)}E_{j}^{(1)} \geq A_1, \, \sum w_{j}^{(2)}E_{j}^{(2)} \geq A_2 \,\big\vert\,$ problem is NP-complete, and that it is polynomial-time solvable if the weights of either one of the two agents are all equal~\cite{Sha}. Below we show that (i) the problem is NP-complete even for the case of unit processing time; and that (ii) the general problem (with arbitrary processing time) is FPT with respect to $k$.

\begin{theorem}
\label{theorem: single job type}
The $1 \,\big\vert\, p_{j}^{(1)}=p_{j}^{(2)}=1, \sum w_{j}^{(1)}E_{j}^{(1)} \geq A_1, \, \sum w_{j}^{(2)}E_{j}^{(2)} \geq A_2 \,\big\vert\,$ problem is \textnormal{NP}-complete.
\end{theorem}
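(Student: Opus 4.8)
The theorem asserts NP-completeness of the two-agent JIT problem even when every job of both agents has unit processing time. Let me think about what structure unit processing times impose before committing to a reduction.

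With $p_j^{(i)}=1$ for all jobs, a schedule is essentially an assignment of jobs to integer time slots: a job placed in slot $t$ completes at time $t$, and it is JIT precisely when $t=d_j^{(i)}$. Since at most one job occupies each slot, scheduling job $J_j^{(i)}$ in JIT mode means ``claiming'' the slot $(d_j^{(i)}-1,d_j^{(i)}]$ for that job. Two jobs can both be JIT only if their due dates differ; if several jobs share a common due date $d$, at most one of them can be JIT (they all compete for the single slot ending at $d$). Thus the combinatorial core of the unit-time problem is a selection problem: choose a subset of agent-1 jobs and a subset of agent-2 jobs to be scheduled JIT, subject to the constraint that no two chosen jobs (from either agent) demand the same due-date slot, while maximizing each agent's weighted JIT count past its threshold $A_1,A_2$. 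The remaining non-JIT jobs can always be pushed far to the right and cause no interference, so feasibility reduces purely to this slot-packing condition.

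The plan is to reduce from \textsc{Partition} (Definition~\ref{Definition: Partition}), mirroring the style of Theorem~\ref{theorem: single job bob}, but now encoding the numerical values through \emph{weights} rather than processing times, since processing times are forced to be $1$. Given a \textsc{Partition} instance $X=\{x_1,\ldots,x_m\}$ with $\sum x_j=2z$, I would create, for each element $x_j$, a pair of competing jobs sharing a common due-date slot $d=j$: one job belonging to agent 1 and one to agent 2, both with weight $x_j$. Because the two jobs in a pair contend for the same unit slot, exactly one of them can be scheduled JIT; this models the binary decision of placing $x_j$ into $S_1$ (agent 1 gets the slot) or $S_2$ (agent 2 gets the slot). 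Setting both thresholds to $A_1=A_2=z$ then forces agent 1's JIT weight and agent 2's JIT weight each to be at least $z$; since the total weight $2z$ is split between the two agents across the $m$ slots, meeting both thresholds simultaneously is possible exactly when the split is an equal partition into two parts of weight $z$. The forward direction (partition $\Rightarrow$ feasible schedule) is the routine slot-assignment construction; the reverse direction uses the observation above that a JIT assignment respecting distinct slots induces a partition, and the two threshold inequalities together with $A_1+A_2=2z$ force both parts to equal $z$.

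The main obstacle I anticipate is controlling \emph{slot contention across distinct pairs}: I must ensure that the only competition is within each pair and that one agent's choices cannot ``steal'' slots meant for another pair, which could otherwise let an agent reach its threshold without inducing a clean partition. Assigning pair $j$ the distinct integer due date $d=j$ (so slots are $1,2,\ldots,m$ and each slot is shared by exactly the two jobs of one pair) cleanly isolates the pairs and makes the within-pair exclusion the only binding constraint. I would also verify NP membership (a JIT schedule is a polynomial-size certificate whose validity and objective values are checkable in polynomial time) to conclude NP-completeness rather than mere NP-hardness.
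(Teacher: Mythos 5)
Your proposal is correct and is essentially identical to the paper's own proof: the paper also reduces from \textsc{Partition} by creating, for each $x_j$, one job per agent with weight $w^{(1)}_j=w^{(2)}_j=x_j$ and common due date $d^{(1)}_j=d^{(2)}_j=j$, setting $A_1=A_2=z$, and arguing that since each slot can host at most one JIT job the total JIT weight is capped at $2z$, so meeting both thresholds forces an exact equal split. No substantive differences to report.
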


\begin{proof}
Given an instance $(X,z)$ to the NP-hard \textsc{Partition} problem (see Definition ~\ref{Definition: Partition}), we construct the following instance for the $1 \,\big\vert\, p_{j}^{(1)}=p_{j}^{(2)}=1, \sum w_{j}^{(1)}E_{j}^{(1)} \geq A_1, \, \sum w_{j}^{(2)}E_{j}^{(2)} \geq A_2 \,\big\vert\,$ problem. We set $n=k=m$, and for $j=1,...,n$ we set $w^{(1)}_j =w^{(2)}_j = x_j$ and $d^{(1)}_j =d^{(2)}_j = j$ (recall that here $p^{(1)}_j =p^{(2)}_j = 1$). Moreover, we set $A_1=A_2=z$. Note that since jobs $J^{(1)}_j$ and $J^{(2)}_j$ have the same due date of $j$ for $j=1,...,n$, only one of them can be completed in a JIT mode. Thus, the total gain for both agents is restricted to be not more than $\sum_{j=1}^{n} x_j=2z$.

Suppose that $X$ can be partitioned into two sets $S_1$ and $S_2$ with $\sum_{x_j \in S_1} x_j = \sum_{x_j \in S_2} x_j = z$. Schedule each job in $\{J_{j}^{(i)}\mid x_{j}\in S_{i}\}$ during time interval $(j-1,j]$, and schedule the remaining jobs in an arbitrary order. Then each job in $\{J_{j}^{(i)}\mid x_{j}\in S_{i}\}$ is scheduled in JIT mode, and $\widehat{\mathcal{E}}^{(i)}=\{J_{j}^{(i)}\mid x_{j}\in S_{i}\}$ for $i=1,2$. The fact that $\sum_{x_j \in S_i} x_j = z$, for $i=1,2$, implies that in $\sigma$ we have $\sum_{J_j^{(i)}\in \widehat{\mathcal{E}}^{(i)}} w_j^{(i)}=z$ for each agent $i$. Thus, there is a feasible schedule for our constructed instance.

For the other direction, suppose there exists a schedule $\sigma$ with $\sum w_{j}^{(i)}E_{j}^{(i)} \geq A_i=z$ for $i=1,2$. Then since the total gain in the objective function of both agents is restricted to be not more than $2z$, we have that $\sum w_{j}^{(i)}E_{j}^{(i)}=z$ for $i=1,2$. This means that by setting $S_i=\{x_j:E_j^{(i)}=1\}$, for $i=1,2$, we obtain a solution for $(X,z)$ with $\sum_{x_j \in S_i} x_j = z$ for $i=1,2$. \qed
\end{proof}

Next, we show that the $1 \,\big\vert\, \sum w_{j}^{(1)}E_{j}^{(1)} \geq A_1, \, \sum w_{j}^{(2)}E_{j}^{(2)} \geq A_2 \,\big\vert\,$ problem is FPT with respect to $k$. To this end, it is convenient to view all jobs in $\mathcal{J}^{(1)} \cup \mathcal{J}^{(2)}$ as time intervals. For a job $J_j^{(i)} \in \mathcal{J}^{(1)} \cup \mathcal{J}^{(2)}$, let the \emph{time interval} of $J_j^{(i)}$ be $I_j^{(i)}=(d_j^{(i)}-p_j^{(i)},d_j^{(i)}]$. Then if $J_j^{(i)}$ is required to be scheduled in JIT mode, it has to be scheduled within its time interval $I_j^{(i)}$. Thus, any pair of jobs $J_{j_1}, J_{j_2} \in \mathcal{J}^{(1)} \cup \mathcal{J}^{(2)}$ can be simultaneously scheduled in JIT mode iff $I_{j_1} \cap I_{j_2} = \emptyset$. This means that our goal now translates to finding a set of pairwise disjoint time intervals, for which the total weight of set of intervals related to the jobs of each agent met its bound.

We try out all possible candidates for $\widehat{\mathcal{E}}^{(2)}$; that is, all subsets of agent~2 jobs $\mathcal{J}_2 \subseteq \mathcal{J}^{(2)}$ that have pairwise disjoint time intervals and $\sum_{J^{(2)}_j \in \mathcal{J}_2} w^{(2)}_j \geq A_2$. For each such subset $\mathcal{J}_2$, we compute the subset of agent~1 jobs $\mathcal{J}_1 \subseteq \mathcal{J}^{(1)}$ with time intervals that do not intersect any time interval of a job in $\mathcal{J}_2$. That is, $\mathcal{J}_1 = \{J_j^{(1)} \in  \mathcal{J}^{(1)} : I_j^{(1)} \cap I_i^{(2)} = \emptyset \text{ for all } J_i^{(2)} \in \mathcal{J}_2\}$. We then compute a maximum weight pairwise disjoint subset $\mathcal{I}^*_1$ of intervals in the set $\mathcal{I}_1 = \{ I^{(1)}_j : J^{(1)}_j \in \mathcal{J}_1\}$, and report that we have found a feasible schedule if the total weight of the jobs in $\mathcal{J}^*_1 = \{ J^{(1)}_j : I^{(1)}_j \in \mathcal{I}^*_1\}$ is at least $A_1$. If no such subset of jobs $\mathcal{J}^*_1$ is found for any possible candidate for $\widehat{\mathcal{E}}^{(2)}$, we report that our instance has no feasible schedule.

Correctness of our algorithm follows from the fact that we compute the optimal set $\widehat{\mathcal{E}}^{(1)}$ for each possible candidate for $\widehat{\mathcal{E}}^{(2)}$. There are $O(2^k)$ candidates $\mathcal{J}_2$ for $\widehat{\mathcal{E}}^{(2)}$. For each candidate $\mathcal{J}_2$, computing the set $\mathcal{J}_1$ can be done in $O(n \log n + k \log n)= O(n \log n)$ time. Moreover, the set $\mathcal{J}^*_1$ can be computed in $O(n \log n)$ time (\emph{e.g.} using~\cite{GuptaLL82}). Thus, in total, our algorithm runs in $O(2^k n \log n )$ time.

\begin{theorem}
\label{theorem: single job bob4}
The $1 \,\big\vert\, \sum w_{j}^{(1)}E_{j}^{(1)} \geq A_1, \, \sum w_{j}^{(2)}E_{j}^{(2)} \geq A_2 \,\big\vert\,$ problem is solvable in $O(2^k n \log n )$ time.
\end{theorem}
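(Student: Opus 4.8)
The plan is to exploit the rigidity of JIT scheduling: a job $J_j^{(i)}$ can be completed just-in-time only if it occupies exactly its time interval $I_j^{(i)}=(d_j^{(i)}-p_j^{(i)},d_j^{(i)}]$. First I would establish the combinatorial reformulation already anticipated in the discussion above: since each JIT job is pinned to its interval and the single machine processes one job at a time, two jobs can be simultaneously scheduled in JIT mode if and only if their intervals are disjoint. Hence a feasible schedule corresponds exactly to a family of pairwise disjoint intervals whose agent-1 weight is at least $A_1$ and whose agent-2 weight is at least $A_2$; all non-JIT jobs can be pushed to the end and ignored. This turns the scheduling question into a weighted interval-packing question with two separate weight targets.

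The second step is to decouple the two agents by brute-forcing over agent 2, which is where the parameter $k$ enters. Because $\mathcal{J}^{(2)}$ has only $k$ jobs, I would enumerate all $O(2^k)$ subsets and retain as candidates for $\widehat{\mathcal{E}}^{(2)}$ only those subsets $\mathcal{J}_2 \subseteq \mathcal{J}^{(2)}$ whose intervals are pairwise disjoint and satisfy $\sum_{J^{(2)}_j \in \mathcal{J}_2} w^{(2)}_j \geq A_2$. Fixing $\mathcal{J}_2$ commits the agent-2 part of the solution, so what remains is a pure agent-1 optimization constrained to avoid the intervals already occupied by $\mathcal{J}_2$.

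For each fixed candidate $\mathcal{J}_2$, I would first discard every agent-1 job whose interval intersects some interval of $\mathcal{J}_2$, keeping the compatible set $\mathcal{J}_1$, and then compute a maximum-weight set of pairwise disjoint intervals among $\{I^{(1)}_j : J^{(1)}_j \in \mathcal{J}_1\}$. This last subproblem is precisely classical weighted interval scheduling (a maximum-weight independent set in an interval graph), which is solvable in $O(n\log n)$ time, for instance via~\cite{GuptaLL82}. I then report a feasible schedule whenever the optimal agent-1 weight reaches $A_1$ for some candidate. Multiplying the $O(2^k)$ candidates by the $O(n\log n)$ per-candidate cost (pruning $\mathcal{J}_1$ is likewise $O(n\log n)$) gives the claimed $O(2^k n \log n)$ running time.

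The step I expect to require the most care is arguing correctness of the decoupling, namely that solving agent 1 optimally after fixing $\widehat{\mathcal{E}}^{(2)}$ never misses a globally feasible schedule. The key point is that disjointness across the two agents is already enforced when forming $\mathcal{J}_1$, so the agent-1 weighted interval scheduling is solved \emph{exactly} over the set of intervals compatible with $\mathcal{J}_2$, and the two weight constraints are genuinely independent once the agent-2 JIT set is fixed. Therefore any globally feasible schedule induces some valid candidate $\mathcal{J}_2$ among those enumerated, for which the computed optimum $\widehat{\mathcal{E}}^{(1)}$ has weight at least $A_1$; conversely any pair accepted by the algorithm yields pairwise disjoint intervals meeting both bounds, hence a feasible schedule. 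This two-sided check completes the proof.
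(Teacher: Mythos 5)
Your proposal is correct and follows essentially the same route as the paper's proof: the reformulation as pairwise-disjoint time intervals, the $O(2^k)$ enumeration of candidates for $\widehat{\mathcal{E}}^{(2)}$, the pruning of incompatible agent-1 intervals, and the $O(n\log n)$ maximum-weight interval scheduling step via~\cite{GuptaLL82} all match. The only addition is your explicit two-sided correctness argument for the decoupling, which the paper states more briefly but identically in substance.
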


\section{Conclusions and Open Problems}
\label{section: Conclusion}
%%%%%%%%%%%%%%%%%%%%%%%%%%%%%%%%%%%%%%%%%%%%%%%%%%%%%%%%%%%%%%%%%%%%%%%%%
%%%%%%%%%% Conclusion and Open Problems
%%%%%%%%%%%%%%%%%%%%%%%%%%%%%%%%%%%%%%%%%%%%%%%%%%%%%%%%%%%%%%%%%%%%%%%%%

In this paper we initiated a parameterized analysis for two-agent single-scheduling problems, where the parameter studied is the number $k$ of jobs belonging to the second agent. We considered three possible scheduling criteria -- total weighted completion time, total weighted number of tardy jobs, and total weighted number of JIT jobs -- and all possible combinations of these criteria for each agent. Our analysis shows that parameter $k$ indeed provides various positive results in different settings, and is summarized in Table~\ref{table: results} below.

\begin{table}[h!]
\begin{center}
\begin{tabular}{|c|c|c|c|}
%\hline
%& \multicolumn{3}{|c|}{Ene} \\
\hline
& & &\\
& \quad \boldmath{$\sum w^{(2)}_j C^{(2)}_j \leq A_2$} \quad & \quad \boldmath{$\sum w^{(2)}_j U^{(2)}_j \leq A_2$} \quad & \quad \boldmath{$\sum w^{(2)}_j E^{(2)}_j \geq A_2$} \quad \\
& & &\\
\hline
& Hard for $w_j^{(2)}=1$ (Th.~\ref{theorem: single job bob}), & Hard for $w_j^{(2)}=1$ (Th.~\ref{theorem: single job bob}), & Hard even when \\
\quad \boldmath{$\sum w^{(1)}_j C^{(1)}_j \leq A_1$} \quad & \quad FPT for $w_j^{(1)}=1$ (Th.~\ref{theorem: unit weights2}), \quad & \quad FPT for $w_j^{(1)}=1$ (Th.~\ref{theorem: unit processing}). \quad & \quad $w_j^{(i)}=1$ (Cor.~\ref{theorem: hardness1}). \quad \\
& \quad FPT for $p_j^{(1)}=1$ (Th.~\ref{theorem: unit processing}). \quad & &  \\
\hline
&  Hard in general (Cor.~\ref{thm: SigmaWUhardness}), & Hard in general (Cor.~\ref{thm: SigmaWUhardness}), & Hard even when \\
\quad \boldmath{$\sum w^{(1)}_j U^{(1)}_j \leq A_1$} \quad & Open for $w_j^{(1)}=1$. & FPT for $w_j^{(1)}=1$ (Th.~\ref{theorem: uuu}), & \quad $w_j^{(i)}=1$ and $d^{(1)}_j=d$ (Cor.~\ref{theorem: hardness2}). \quad \\
& & \quad FPT for $p_j^{(i)}=1$ (Th.~\ref{theorem: uuu2}). \quad  & \\
\hline
& Open in general, &  & \\
\quad \boldmath{$\sum w^{(1)}_j E^{(1)}_j \geq A_1$} \quad & FPT when $w_j^{(1)}=1$ (Th.~\ref{theorem: EC1}). & \LARGE{FPT} \normalsize(Th.~\ref{theorem: EU1}) & \LARGE{FPT} \normalsize(Th.~\ref{theorem: single job bob4})\\
& & &\\
\hline
\end{tabular}
\end{center}
\caption{\label{table: results} A table depicting most of the results presented in the paper. Rows correspond to the different scheduling criteria for agent 1, while columns are associated with the different criteria for agent 2.}
\end{table}

There are several directions to directly extend our work. First, one can find different parameters such as the number of different processing times in the input, or the number of different due dates. Many such parameterizations make perfect sense for practical applications. Second, one consider other scheduling criteria not considered in this paper such as the maximal lateness. Below we list the three most important questions that were left open directly from our work:
\begin{enumerate}
\item Determine the parameterized complexity of the $1 \,\big\vert\, \sum U_{j}^{(1)} \leq A_1, \, \sum w_{j}^{(2)} C_{j}^{(2)} \leq A_2 \,\big\vert\,$ problem, or any of its variants (unit weights, unit processing times, etc ...).
\item Determine the parameterized complexity of the $1 \,\big\vert\, \sum w_{j}^{(1)} E_{j}^{(1)} \geq A_1, \, \sum w_{j}^{(2)} C_{j}^{(2)} \leq A_2 \,\big\vert\,$ problem.
\item Can the running times of the algorithms presented in Sections~\ref{section: CC2} and~\ref{section: CC3} be improved? More specifically, are there purely combinatorial algorithms for these problems, avoiding ILPs altogether?
\end{enumerate} 

\section{Acknowledgments}
The research leading to these results has received funding from the People Programme (Marie Curie Actions) of the European Union's Seventh Framework Programme (FP7/2007-2013) under REA grant agreement number 631163.11, and by the Israel Science Foundation (grant No. 1055/14). Nimrod Talmon was supported by a postdoctoral fellowship from I-CORE ALGO.

\end{document}